\documentclass[12pt]{article}

\usepackage[margin=1in]{geometry}
\usepackage[english]{babel}
\usepackage[utf8]{inputenc}

\usepackage{amsmath}
\usepackage{amssymb}
\usepackage{amsfonts}
\usepackage{amsthm}
\usepackage{mathdots}
\usepackage{mathrsfs}
\usepackage{bbm}

\usepackage{enumitem}
\usepackage{array}
\newcolumntype{C}{>{$}c<{$}}

\usepackage{graphicx}
\usepackage[dvipsnames,svgnames]{xcolor}
\usepackage[labelfont=bf,labelsep=period,justification=raggedright]{caption}
\usepackage{subcaption}
\usepackage[all]{xy}
\usepackage{tikz}
\usetikzlibrary{calc}
\usepackage{pgfplots}
\pgfplotsset{compat=1.18}
\usepackage{tkz-fct}

\usepackage{multicol}
\usepackage{textcomp}
\usepackage{url}
\usepackage{cite}
\usepackage{indentfirst}

\usepackage{hyperref}
\usepackage{aliascnt}
\usepackage[capitalize,nameinlink]{cleveref}

\crefname{enumi}{Part}{Parts}
\Crefname{enumi}{Part}{Parts}

\theoremstyle{plain}

\newtheorem{theorem}{Theorem}[section]
\crefname{theorem}{Theorem}{Theorems}
\Crefname{theorem}{Theorem}{Theorems}

\newaliascnt{thm}{theorem}

\aliascntresetthe{thm}
\crefname{thm}{Theorem}{Theorems}
\Crefname{thm}{Theorem}{Theorems}

\newaliascnt{lemma}{theorem}
\newtheorem{lemma}[lemma]{Lemma}
\aliascntresetthe{lemma}
\crefname{lemma}{Lemma}{Lemmas}
\Crefname{lemma}{Lemma}{Lemmas}

\newaliascnt{proposition}{theorem}

\aliascntresetthe{proposition}
\crefname{proposition}{Proposition}{Propositions}
\Crefname{proposition}{Proposition}{Propositions}

\newaliascnt{prop}{theorem}
\newtheorem{prop}[prop]{Proposition}
\aliascntresetthe{prop}
\crefname{prop}{Proposition}{Propositions}
\Crefname{prop}{Proposition}{Propositions}

\newaliascnt{corollary}{theorem}
\newtheorem{corollary}[corollary]{Corollary}
\aliascntresetthe{corollary}
\crefname{corollary}{Corollary}{Corollaries}
\Crefname{corollary}{Corollary}{Corollaries}

\newaliascnt{claim}{theorem}

\aliascntresetthe{claim}
\crefname{claim}{Claim}{Claims}
\Crefname{claim}{Claim}{Claims}

\newaliascnt{conjecture}{theorem}

\aliascntresetthe{conjecture}
\crefname{conjecture}{Conjecture}{Conjectures}
\Crefname{conjecture}{Conjecture}{Conjectures}

\theoremstyle{definition}

\newaliascnt{defn}{theorem}
\newtheorem{defn}[defn]{Definition}
\aliascntresetthe{defn}
\crefname{defn}{Definition}{Definitions}
\Crefname{defn}{Definition}{Definitions}

\newaliascnt{definition}{theorem}

\aliascntresetthe{definition}
\crefname{definition}{Definition}{Definitions}
\Crefname{definition}{Definition}{Definitions}

\theoremstyle{remark}

\newaliascnt{remark}{theorem}
\newtheorem{remark}[remark]{Remark}
\aliascntresetthe{remark}
\crefname{remark}{Remark}{Remarks}
\Crefname{remark}{Remark}{Remarks}

\newaliascnt{observation}{theorem}

\aliascntresetthe{observation}
\crefname{observation}{Observation}{Observations}
\Crefname{observation}{Observation}{Observations}

\newcommand{\eps}{\varepsilon}

\newcommand{\dtv}{d_\mathrm{TV}}
\newcommand{\Tmix}{T_\mathrm{mix}}
\newcommand{\wG}{\widehat{G}}
\newcommand{\wH}{\widehat{H}}

\newcommand{\EE}{\mathbb{E}}

\newcommand{\NN}{\mathbb{N}}

\newcommand{\mcD}{\mathcal{D}}

\newcommand{\mcS}{\mathcal{S}}

\newcommand{\bbone}{\mathbbm{1}}

\newcommand{\multiset}[2]{%
  \ensuremath{\left(\kern-.3em\left(\genfrac{}{}{0pt}{}{#1}{#2}\right)\kern-.3em\right)}%
}

\newcommand{\fptas}{\mathsf{FPTAS}}

\title{Sampling Simultaneous Edge-Colorings}

\author{
Ezra Furtado-Tiwari and 
Eric Vigoda\thanks{
Department of Computer Science, University of California, Santa Barbara.
Email: \{ezrafurtado-tiwari,vigoda\}@ucsb.edu.
Research supported in part by NSF grant CCF-2147094.
}
}

\date{}

\begin{document}
\maketitle

\begin{abstract}
We study the sampling problem for simultaneous edge colorings.  Given a pair of graphs $G_1=(V,E_1)$ and $G_2=(V,E_2)$ which are on the same vertex set $V$, a simultaneous edge coloring is an edge coloring of $G_1\cup G_2$ so that each of the individual graphs is properly colored.  When each of $G_1$ and $G_2$ are of maximum degree
$\Delta$, then it is conjectured that $\Delta+2$ colors suffice, and recent work asymptotically establishes the conjecture.  

We study Markov chains for randomly sampling from the uniform distribution over simultaneous edge colorings.  Straightforward applications of Jerrum's classical coupling argument establish rapid mixing of the Glauber dynamics on the corresponding line graph when $k>8\Delta$.  We present a simple weighted Hamming distance for which Jerrum's coupling yields optimal mixing time (up to constant factors) of $O(m\log{n})$ when $k>(6+\delta)\Delta$ for any fixed $\delta>0$.  Moreover, utilizing the flip dynamics with our new metric, we obtain $O(m\log{n})$ mixing of the flip dynamics when $k\geq 5.948\Delta$, using a local choice of flip parameters which only flips bounded-size components.  The proof adapts previous coupling analyses for the flip dynamics to the setting of simultaneous edge colorings.
\end{abstract}

\thispagestyle{empty}

\newpage

\setcounter{page}{1}

\section{Introduction}

A natural combinatorial problem which was recently introduced with intriguing open problems is simultaneous edge colorings. 
In the simultaneous edge coloring problem, we are given as input a pair of undirected graphs $G_1=(V,E_1)$ and $G_2=(V,E_2)$ on a common vertex set $V=[n]=\{1,\dots,n\}$, where each graph has maximum degree of $\Delta$.  A simultaneous edge-coloring is an assignment $\phi:E_1\cup E_2\rightarrow [k]$ such that for all $i\in\{1,2\}$, all $e,e'\in E_i$, if $e\cap e'\neq\emptyset$ then $\phi(e)\neq\phi(e')$.  In other words we are coloring the edges of both graphs so that the edges within each graph are properly colored, but adjacent edges in different graphs can be monochromatic.

Let $\chi(G_1,G_2)$ denote the minimum $k$ for which a simultaneous $k$-edge-coloring of $G_1$ and $G_2$ exists.
Vizing's Theorem shows that for a graph $G$ of maximum degree $\Delta$, its edge chromatic number satisfies $\chi(G)\leq\Delta+1$.  The intriguing question for simultaneous edge colorings is whether $\chi(G_1,G_2)\sim\Delta$.

The simultaneous edge coloring problem for graphs $G_1=(V,E_1)$ and $G_2=(V,E_2)$ can be recast as a vertex coloring problem via line graphs. Let $L_1$ and $L_2$ denote the line graphs of $G_1$ and $G_2$, respectively, and consider the graph $\wG = (\widehat{V}, \widehat{E})$ obtained by taking the union of $L_1$ and $L_2$, and identifying vertices corresponding to edges that appear in both $E_1$ and $E_2$.  A simultaneous edge coloring of $G_1$ and $G_2$ is then equivalent to a proper vertex coloring of~$\wG$.

If $G_1$ and $G_2$ have maximum degree $\Delta$, then each of $L_1$ and $L_2$ has maximum degree at most $2\Delta-2$, and hence $\wG$ has maximum degree at most $4\Delta-4$. However, this reduction obscures important structure: adjacency in $\wG$ arises only from constraints within $G_1$ or within $G_2$, and edges arising from different graphs impose no additional constraints on each other unless they correspond to the same original edge.  Thus, while the naive reduction suggests a dependence on $4\Delta$, the underlying constraint system retains a more refined structure tied to the original degree $\Delta$.

This observation raises the question of whether the mixing behavior of associated Markov chains for randomly sampling simultaneous edge colorings is governed by the maximum degree $4\Delta$ of the union graph, or by the intrinsic constraint degree $\Delta$ of the original graphs. Our results show that the latter is the case: optimal mixing time can be achieved in a regime determined by $\Delta$, rather than $4\Delta$.

The problem of simultaneous edge coloring was introduced recently by Cabello (see Bousquet and Durain~\cite{BousquetDurain}) who conjectured that $\chi(G_1,G_2)\leq\Delta+2$ for any graphs $G_1,G_2$ of maximum degree $\Delta$.
Recent work of Boyadzhiyska, Lang, Lo, and Molloy~\cite{BLLM}
established this threshold asymptotically by showing that $\chi(G_1,G_2)\leq \Delta + o(\Delta)$.

We address the corresponding sampling problem: for what range of $k$ vs.~$\Delta$ can we efficiently sample from the space of simultaneous $k$-edge-colorings.  Given graphs $G_1$ and $G_2$, and integer $k$, let $\Omega$ denote the set of simultaneous $k$-edge-colorings of $G_1$ and $G_2$.

The natural approach for the sampling problem is the Glauber dynamics for edge colorings. From an edge coloring $X_t \in \Omega$, we choose an edge $e \in E_1 \cup E_2$ uniformly at random, and a color $c \in [k]$ uniformly at random. If no edge adjacent to $e$ in $G_1$ or in $G_2$ has color $c$ in $X_t$, then we recolor $e$ to color $c$ to obtain $X_{t+1}$, and otherwise we set $X_{t+1} = X_t$.

The mixing time is the number of steps, from the worst initial state $X_0$, to reach within total variation distance $\leq 1/4$ of the stationary distribution.

The flip dynamics is a generalization of the Glauber dynamics for sampling vertex colorings of $\wG$, and hence simultaneous edge colorings of the pair $(G_1,G_2)$. The transitions ``flip'' maximal connected components induced by two colors, by interchanging the pair of colors appearing in the component.  The probability of flipping a component of size $\ell$ is $P_\ell/(mk)$ for specified flip probabilities $(P_i)_{i\geq 1}$, where $m$ is the number of vertices in $\wG$. When $P_1=1$ and $P_i=0$ for all $i\geq 2$ then  the flip dynamics coincides with the Glauber dynamics. In all previous works involving the flip dynamics~\cite{Vigoda00,CDMPP19,CarlsonVigoda25}, and also in the current work, the flip dynamics are 6-local which means that $P_i=0$ for all $i\geq 7$.  Consequently, $O(m\log n)$ mixing of a 6-local flip dynamics implies $O(m^2)$ mixing time for the Glauber dynamics via a comparison of the corresponding spectral gaps; moreover, the results of this paper, establishing a contractive coupling for list colorings, imply $O(m\log{n})$ mixing time of the Glauber dynamics for constant $\Delta$ via spectral independence~\cite{BCCPSV22,Liu21,CLV21,ALO}.

Sampling results for vertex colorings translate to edge colorings by taking the line graph and hence increasing the maximum degree $\Delta$ to $2\Delta$.  Jerrum~\cite{Jerrum} established $O(n\log{n})$ mixing time of the Glauber dynamics for vertex colorings whenever $k>2\Delta$, and hence  it establishes $O(m\log{n})$ mixing time for edge colorings (i.e., vertex colorings of the line graph) when $k>4\Delta$, and for simultaneous edge colorings when $k>8\Delta$, where $m$ is the number of edges in the original graph(s).

For general graphs, improved mixing results via the flip dynamics, which is a generalization of the Glauber dynamics, establish $O(n\log{n})$ mixing of the flip dynamics for vertex colorings when $k\geq 1.809\Delta$ by Carlson and Vigoda~\cite{CarlsonVigoda25} (see also, \cite{CDMPP19,Vigoda00}); this yields $O(m\log{n})$ mixing time when $k\geq 7.236\Delta$ for simultaneous edge colorings.  

The above fast mixing results of Jerrum~\cite{Jerrum} and the flip dynamics results of \cite{CarlsonVigoda25,CDMPP19,Vigoda00} show a contractive coupling, which then yields a deterministic approximate counting $\fptas$ algorithm, via the algorithm of Chen, Feng, Guo, Zhang, and Zou~\cite{CFGZZ25} for the same parameter range.

Further improved results~\cite{WZZ24-edge-colorings,CWZZ} are known for vertex colorings of a line graph, which corresponds to edge colorings of the original graph.  However these results do not directly apply to the problem of simultaneous edge colorings since it corresponds to vertex colorings of the union of two line graphs, as described earlier.

Our main contribution is a direct coupling analysis that exploits the overlapping constraint structure of simultaneous edge colorings. We introduce a weighted Hamming metric that distinguishes edges appearing in one graph versus both, enabling a refined path coupling argument which holds for all~$\Delta$.
 In contrast to recent approaches for edge colorings that rely on spectral independence and the matrix trickle-down framework, our analysis is based on an explicit coupling and avoids these more technical tools. As a consequence, our proofs are comparatively simple, and moreover yield a deterministic $\fptas$ for the approximate counting problem via the algorithm of Chen, Wang, Zhang, and Zhang~\cite{CWZZ}.
  In addition, our results hold for general graphs, including the unbounded degree case where $\Delta$ is allowed to grow with $n=|V|$.

We first present a simple and explicit coupling argument for the Glauber dynamics for simultaneous edge colorings when $k>6\Delta$; this improves upon the naive $k>8\Delta$ obtained via a generic reduction to Jerrum's $k>2\Delta$ result.  
We then extend our argument to the flip dynamics to obtain fast mixing when $k\geq 5.948\Delta$, improving upon the $k\geq 7.236\Delta$ obtained by applying~\cite{CarlsonVigoda25}.  As in previous works~\cite{Vigoda00,CDMPP19,CarlsonVigoda25}, our flip dynamics is 6-local which means that only components of size at most $6$ are flipped.

\begin{theorem}\label{thm:main-theorem}For all $n$, all $\Delta$, all pairs of $n$-vertex graphs $G_1=(V,E_1), G_2=(V,E_2)$ with maximum degree $\Delta$, the following hold, where $m$ is the number of edges in $G_1 \cup G_2$:
\begin{enumerate}[label=(\roman*)]
\item 
\label{main:easy}
For all fixed $\delta>0$, for $k>(6+\delta)\Delta$, the Glauber dynamics for simultaneous edge colorings has mixing time $O(m\log n)$.
\item \label{main:hard}
For $k \geq 5.948\Delta$,  the flip dynamics for simultaneous edge colorings has mixing time $O(m\log n)$ for a 6-local setting of the flip probabilities.
\end{enumerate}
\end{theorem}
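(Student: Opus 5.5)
We use path coupling on the set of all (not necessarily proper) $k$-colorings of $\wG$, with a weighted Hamming metric. Edges of $G_1\cup G_2$ split into \emph{shared} edges ($e\in E_1\cap E_2$) and \emph{single} edges (in exactly one $E_i$). We give single edges weight $1$ and shared edges weight $a\ge 1$, where $a$ is chosen below. The distance $d(X,Y)$ is then the total weight of the edges on which $X$ and $Y$ disagree. Fix a pair $X,Y$ that differ only at one edge $e$. We use Jerrum's coupling: update the same edge in both chains, use the identity coupling of colors, except that at each neighbor $f$ of $e$ in $\wG$ we swap $X(e)\leftrightarrow Y(e)$.

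\emph{Part \ref{main:easy}.} In the coupled step there are two kinds of moves.
\begin{itemize}
\item \textbf{Good moves:} choosing $e$ together with a color not present on its neighbors. This happens with probability at least $(k-|N(e)|)/(mk)$ and decreases the distance by $w(e)$.
\item \textbf{Bad moves:} for each neighbor $f$, at most one color choice (out of $mk$ pairs) makes the chains disagree at $f$. This increases the distance by $w(f)$.
\end{itemize}
So it suffices to show $w(e)(k-|N(e)|)-\sum_{f\sim e}w(f)\ge c\,w(e)$ for a constant $c>0$, which gives $O(m\log n)$ mixing, since $\log(\text{diameter}) = O(\log(a\,m)) = O(\log n)$.

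We check this in two cases.
\begin{itemize}
\item \textbf{$e$ single, say $e\in E_1$.} Its neighbors are edges of $G_1$ only, so $|N(e)|\le 2\Delta-2$ and $\sum w(f)\le (2\Delta-2)a$. We need $k-2\Delta>2\Delta a$.
\item \textbf{$e$ shared.} Suppose $e$ has $s$ shared neighbors. Each shared neighbor is adjacent to $e$ in both line graphs, so it is counted once rather than twice. This gives $|N(e)|\le 4\Delta-s$ and $\sum w(f)\le sa+(4\Delta-2s)$. The requirement becomes $a(k-4\Delta)>4\Delta-2s$, whose worst case is $s=0$, i.e.\ $a(k-4\Delta)>4\Delta$.
\end{itemize}
Take $a=(k-2\Delta)/(2\Delta)-\eta$ for a small $\eta$, and write $k=x\Delta$. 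The two constraints then reduce to $(x-2)(x-4)>8$, i.e.\ $x>6$. For $x\ge 6+\delta$ both inequalities hold with slack $\Omega_\delta(1)\cdot w(e)$, and $a$ is bounded. This proves \ref{main:easy}.

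\emph{Part \ref{main:hard}.} We use the same two-weight metric, but now combine it with the flip-dynamics coupling of Vigoda~\cite{Vigoda00}, Chen et al.~\cite{CDMPP19} and Carlson--Vigoda~\cite{CarlsonVigoda25}. For each neighbor $f$ of $e$ we couple the flips of the Kempe components containing $e$ and $f$, with colors $\{X(e),c\}$ and $\{Y(e),c\}$. This gives the standard per-neighbor bounds on the expected change in Hamming distance, in terms of the component sizes $(A,B)$ and the $6$-local flip parameters $P_1,\dots,P_6$.

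The new ingredient is weighting. When a flip creates new disagreements, each disagreeing edge contributes its weight, so the worst case is when the components consist of shared edges. Because $e$'s neighbors in the two line graphs are different for a shared $e$, but identical for a single $e$, the per-color analysis splits into the two cases above: single $e$ with up to $2\Delta$ neighbors, and shared $e$ with up to $4\Delta-s$ neighbors. In each case we get a linear inequality in the $P_i$ and $a$. We then optimize over $a$ and $(P_i)_{i\le 6}$, as in \cite{CarlsonVigoda25}, to find the smallest $x$ with a contractive coupling, which we expect to be $5.948$.

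The main obstacle is this step. Components of size up to $6$ may mix shared and single edges, so the worst-case weighted configurations must be enumerated carefully. The per-neighbor bounds of \cite{CarlsonVigoda25} have to be re-derived with weights, accounting for shared edges being adjacent through both graphs. Once these weighted bounds are established, a computer-assisted check of the resulting LP should certify contraction at $k\ge 5.948\Delta$.
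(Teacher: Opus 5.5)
\textbf{Part (i).} Your argument is correct and is essentially the paper's: the same weighted Hamming distance, Jerrum's coupling, and a case split on whether $v^*$ is a shared edge. The paper fixes the shared weight at $2$, which is exactly your $a=(x-2)/2$ at $x=6$. With that choice both cases give drift at most $-(k-6\Delta)\wH/(mk)$, uniformly in $k$. Two small repairs are needed.
\begin{itemize}
\item Contraction at rate $1-\Omega(1/m)$ requires slack of order $k\,w(e)$, not a constant times $w(e)$. A constant slack only gives $O(mk\log n)$.
\item Your $a$ is unbounded as $k/\Delta\to\infty$. In that regime the single-edge slack $2\Delta\eta$ is $o(k)$.
\end{itemize}
Capping $a$ at $2$ fixes both.

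\textbf{Part (ii).} Here there is a genuine gap. What you give is a plan rather than a proof, and the constant $5.948$ is only asserted (``we expect'', ``should certify''). Three concrete ingredients are missing.

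First, the coupling cannot be organized per neighbor. When $d_c\ge 2$ neighbors $w_1,\dots,w_{d_c}$ of $v^*$ share color $c$, the single cluster $S_{X_t}(v^*,c)$ contains all of them. Following Vigoda, the paper matches it with the largest of the clusters $S_{Y_t}(w_i,X_t(v^*))$, does the symmetric thing for $S_{Y_t}(v^*,c)$, and then greedily pairs the leftover flip probabilities of $S_{X_t}(w_i,Y_t(v^*))$ and $S_{Y_t}(w_i,X_t(v^*))$. The analysis is therefore per color.

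Second, and this is the new idea the weighted setting requires: $|N(v^*)|$ is up to $2\Delta$ or $4\Delta$ depending on $W(v^*)$, so you cannot bound the increase per neighbor and multiply by the degree. The paper proves per-color bounds relative to the weight $W_c$ of the color class:
\begin{itemize}
\item $mk\,\EE[\alpha_c]\le -W(v^*)$ when $d_c=0$;
\item $mk\,\EE[\alpha_c]\le A(v^*)W_c+W(v^*)(d_c-1)$ when $d_c\ge 1$, where $A(v^*)=\tfrac34+2P_3$ if $W(v^*)=1$ and $A(v^*)=P_1+P_2-2P_3$ if $W(v^*)=2$.
\end{itemize}
The $W(v^*)(d_c-1)$ terms are cancelled exactly by the $R=\sum_c(d_c-1)$ extra available colors at $v^*$. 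The bound $\sum_c W_c\le 4\Delta$ then closes the argument. Proving these bounds requires the separate case analyses for $d_c=1$, for $d_c\ge 2$ with $W(v^*)=2$, and for $d_c=2$ versus $d_c\ge 3$ with $W(v^*)=1$, including unequal neighbor weights.

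Third, you need explicit $6$-local probabilities satisfying all the resulting inequalities. The paper uses $(P_2,\dots,P_6)=(137,77,47,27,12)/650$, chosen so that the worst case for $W(v^*)=1$ and the worst case for $W(v^*)=2$ give equal drift: $2+4(\tfrac34+2P_3)=4+2(P_1+P_2-2P_3)=1933/325<5.948$.

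None of these per-color bounds, and no certificate for the claimed threshold, appears in your proposal. Re-deriving the bounds ``with weights'' is not a routine step here; it is where the entire content of Part (ii) lies. So Part (ii) remains unproved.
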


In particular, both dynamics achieve $O(m\log n)$ mixing time in these parameter regimes.
Hayes and Sinclair~\cite{HayesSinclair} proved that for the Glauber dynamics on vertex colorings of the line graph, the mixing time is $\Omega(m\log{n})$ for any bounded degree graph.
Therefore, for bounded degree graphs, our bounds achieve optimal mixing time (up to constant factors) in regimes significantly below the $8\Delta$ threshold obtained via generic reductions.  

The proof of \cref{main:easy} also establishes $O(mk\log{n})$ mixing time when $k\geq 6\Delta+1$.  In addition, \cref{thm:main-theorem} holds for the list colorings version of simultaneous edge colorings, and consequently, as mentioned earlier, the algorithm of  \cite{CFGZZ25} applies and consequently we obtain a deterministic approximate counting algorithm for the partition function, see \cref{section:list-colorings} for further details.

We now briefly describe the main technical idea underlying our proofs.

Our proofs are based on a weighted Hamming distance in which edges are weighted according to whether they appear in one graph or in both graphs, distinguishing edges in $E_1 \cap E_2$ from those appearing in only one graph. This reweighting captures the asymmetric constraint structure of the problem and allows us to apply Jerrum's $k>2\Delta$ path coupling argument~\cite{Jerrum} to obtain a simple proof of contraction for the Glauber dynamics when $k>6\Delta$. We then adapt Vigoda's original $k>(11/6)\Delta$ analysis for the flip dynamics~\cite{Vigoda00} to this setting, yielding our improved bound for the flip dynamics.

We establish the setup and preliminaries for our analysis in \cref{section:prelims}.  We prove rapid mixing of the Glauber dynamics, thereby establishing \cref{main:easy} of \cref{thm:main-theorem}, in \cref{section:glauber-analysis}. In \cref{section:flip-theorem,section:flip-analysis,section:proofs}, we analyze the flip dynamics chain to prove \cref{main:hard} of \cref{thm:main-theorem}. Finally, we extend our results to list colorings in \cref{section:list-colorings}; this yields an $\fptas$ for the partition function.

\section{Preliminaries}\label{section:prelims}

\subsection{Mixing Time of a Markov Chain}

We will analyze the mixing times on our chain, which is a way to define the rate of convergence of our Markov chain to its stationary distribution. In our case this distribution will be the uniform distribution on the set of simultaneous edge colorings. 

Let $M$ be a Markov chain on state space $\Omega$ with transition matrix $P$ and stationary distribution $\pi.$
For probability distributions $\nu_1, \nu_2,$ on $\Omega$ we write the total variation distance as
$$\dtv(\nu_1,\nu_2) = \frac{1}{2}\sum_{\sigma \in \Omega} |\nu_1(\sigma) - \nu_2(\sigma)|.$$

\begin{defn}
    Define the \emph{mixing time} of $M$ for $\eps > 0$ to be 
    $$\Tmix(\eps) = \max_{X_0 \in \Omega} \min \{t \colon \dtv(P^t(X_0, \cdot), \pi) \leq \eps\}.$$
    We will refer to $\Tmix = \Tmix(1/4)$ as the mixing time; note that 
    $$\Tmix(\eps) \leq \Tmix \cdot \lceil \log_2(1/\eps)\rceil$$
    for any $\eps > 0.$
\end{defn}

\subsection{Setting: Vertex Colorings of Appropriate Line Graph}
\label{sub:setup}

\begin{defn}
Let $G_1 = (V,E_1)$ and $G_2 = (V,E_2)$ be graphs on a common vertex set. A simultaneous edge $k$-coloring is a function $\chi \colon E_1 \cup E_2 \to [k]$ such that for each $i \in \{1,2\}$, any two edges $e_1,e_2 \in E_i$ that share an endpoint satisfy $\chi(e_1) \neq \chi(e_2)$.
\end{defn}

We are interested in the set of simultaneous edge colorings which can be recast as vertex colorings in the following manner.

\begin{defn}
Let $G_1 = (V,E_1)$ and $G_2 = (V,E_2)$ be graphs on a common vertex set $V=[n]$.
    Let $L(G_1)$ and $L(G_2)$ denote the corresponding line graphs.
    Let $\wG = (\widehat{V}, \widehat{E})$ be the graph obtained by taking the union of $L(G_1)$ and $L(G_2)$, identifying vertices corresponding to edges that appear in both $E_1$ and $E_2$.
\end{defn}

Since $G_1$ and $G_2$ share a common vertex set, an edge $e$ may appear in both graphs, and hence the line graphs $L(G_1)$ and $L(G_2)$ may share vertices.

The set of simultaneous edge colorings of $G_1$ and $G_2$ is the same as the set of proper vertex colorings of $\wG$.

Suppose that $G_1$ and $G_2$ have maximum degree $\Delta$. Then each of $L(G_1)$ and $L(G_2)$ has maximum degree at most $2\Delta-2$, and hence $\wG$ has maximum degree at most $4\Delta-4$.
Our setting provides additional structure: if an edge appears in only one of $G_1$ and $G_2$, then the corresponding vertex in $\wG$ has degree at most $2\Delta$, while if it appears in both graphs its degree can be as large as $4\Delta-4$.

For the remainder of the argument we consider only vertex colorings of $\wG.$ Our state space $\Omega$ is thus the set of all proper vertex $k$-colorings of $\wG.$

\subsection{Path Coupling}

Our arguments for rapid mixing of both the Glauber and flip dynamics rely on the path coupling technique, which gives us estimates on the mixing time if we can provide a coupling which is a contraction mapping relative to some quasi-metric along a set of canonical paths between states. Suppose $\sigma, \xi \in [k]^{|\widehat{V}|}.$ We say that $\sigma$ and $\xi$ are \emph{neighbors} if they differ only at one vertex, and write $\sigma \sim \xi$ in this case. We say $\eta = (\eta_0, \ldots, \eta_\ell)$ is a \emph{simple path} if $\eta_i \neq \eta_j$ for $i \neq j$ and $\eta_i \sim \eta_{i+1}$ for all $i \in \{0,\ldots,\ell - 1\}.$ Denote the set of paths between $\sigma$ and $\xi$ as follows:
$$\rho(\sigma, \xi) = \{\eta \colon \eta_0 = \sigma, \eta_{\ell} = \xi, \eta \text{ is a simple path}\}.$$

We state a version of the path coupling lemma which is not in full generality but suffices for our argument.

\begin{lemma}[Bubley and Dyer, \cite{BubleyDyer97}]\label{lemma:path-coupling}

    Consider a Markov chain with state space~$\Omega^*.$ Let $\Sigma = \{(\sigma, \xi) \in [k]^{|\widehat{V}|} \colon \sigma \sim \xi\} \subset \Omega^* \times \Omega^*$, and let
    $$\Phi \colon \Omega^* \times \Omega^* \to \{0, 1, \ldots, D\}$$
    be such for all $(\sigma, \xi) \in \Omega^* \times \Omega^*,$  
    $$\Phi(\sigma, \xi) = \min\left\{ \sum_{i=0}^{\ell - 1}\Phi(\eta_i, \eta_{i+1}) \colon \eta \in \rho(\sigma, \xi)\right\}.$$
    Additionally we require that if $\sigma \neq \xi,$ $\Phi(\sigma, \xi) \geq 1.$

    If $(\Omega^*, \Sigma)$ is a connected graph, and there is a $\beta < 1$ and a coupling $(X_t, Y_t) \to (X_{t+1}, Y_{t+1})$ such that for $X_t \sim Y_t,$ 
    $$\EE[\Phi(X_{t+1}, Y_{t+1})] \leq \beta\Phi(X_t, Y_t),$$
    then we have the following upper bound on the mixing time:
    $$\Tmix(\eps) = O\left(\frac{\log(D/\eps)}{1 - \beta}\right).$$
\end{lemma}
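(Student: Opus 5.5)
The plan is the standard Bubley--Dyer argument, in three steps: extend the coupling from neighboring pairs to all pairs, iterate the resulting contraction, and convert it into a total variation bound via the coupling inequality.

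\emph{Step 1: $\Phi$ is a quasi-metric.} I will first check that $\Phi$ satisfies the triangle inequality $\Phi(\sigma,\xi)\le \Phi(\sigma,\zeta)+\Phi(\zeta,\xi)$. Take minimizing simple paths from $\sigma$ to $\zeta$ and from $\zeta$ to $\xi$, and concatenate them. The concatenation is a walk that need not be simple. Removing cycles from it leaves a simple path in $\rho(\sigma,\xi)$, and this can only decrease the sum, since all terms $\Phi(\eta_i,\eta_{i+1})$ are nonnegative. Connectivity of $(\Omega^*,\Sigma)$ guarantees that $\rho(\sigma,\xi)\neq\emptyset$, so $\Phi$ is finite everywhere.

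\emph{Step 2: extend the coupling to arbitrary pairs.} Given arbitrary $(X_t,Y_t)$, fix a path $\eta\in\rho(X_t,Y_t)$ attaining the minimum in the definition of $\Phi(X_t,Y_t)$. I will build a joint distribution of $(\eta_0',\dots,\eta_\ell')$ by gluing.
\begin{itemize}
\item Sample $\eta_0'$ from one step of the chain started at $\eta_0$.
\item For each $i$ in turn, sample $\eta_{i+1}'$ from the conditional law, given $\eta_i'$, of the given coupling for the neighboring pair $(\eta_i,\eta_{i+1})$.
\end{itemize}
Each $\eta_i'$ then has the correct one-step marginal from $\eta_i$. In particular, $(X_{t+1},Y_{t+1}):=(\eta_0',\eta_\ell')$ is a valid coupling of one step from $X_t$ and from $Y_t$. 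This gluing step is the one place needing care: one must check that the conditional sampling preserves every marginal, and that intermediate states may lie in $\Omega^*\setminus\Omega$. The latter is allowed, because the hypothesized coupling is defined on all neighboring pairs in $\Omega^*$. Using the triangle inequality from Step 1, linearity of expectation, and the hypothesis, we get
\[
\EE[\Phi(X_{t+1},Y_{t+1})]\le \sum_{i=0}^{\ell-1}\EE[\Phi(\eta_i',\eta_{i+1}')]\le \beta\sum_{i=0}^{\ell-1}\Phi(\eta_i,\eta_{i+1})=\beta\,\Phi(X_t,Y_t).
\]

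\emph{Step 3: iterate and conclude.} Iterating Step 2 with the tower property gives $\EE[\Phi(X_t,Y_t)]\le\beta^t\Phi(X_0,Y_0)\le \beta^t D$. Since $\Phi\ge 1$ whenever its arguments differ, Markov's inequality yields $\Pr[X_t\neq Y_t]\le \beta^t D\le e^{-(1-\beta)t}D$. Now take $X_0$ arbitrary and $Y_0\sim\pi$, so that $Y_t\sim\pi$ for all $t$. The coupling inequality gives $\dtv(P^t(X_0,\cdot),\pi)\le \Pr[X_t\ne Y_t]$. This is at most $\eps$ once $t\ge \log(D/\eps)/(1-\beta)$, which gives $\Tmix(\eps)=O\bigl(\log(D/\eps)/(1-\beta)\bigr)$.
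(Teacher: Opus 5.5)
Your proposal is correct. The paper gives no proof of this lemma and simply cites Bubley and Dyer; your argument is the standard proof from that source: glue the neighbor couplings along a $\Phi$-minimizing simple path, use the triangle inequality (which you rightly derive from path-minimality by cycle removal) to get $\EE[\Phi(X_{t+1},Y_{t+1})]\le\beta\,\Phi(X_t,Y_t)$ for all pairs, then iterate, apply Markov's inequality using $\Phi\ge 1$ off the diagonal, and finish with the coupling inequality and $Y_0\sim\pi$.
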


Note that we only need to provide a coupling for the states $X_t \sim Y_t.$

\subsection{Expanded State Space}
\label{sub:expanded-state-space}

In order to apply the path coupling lemma to study chains on $\Omega,$ the set of all proper $k$-colorings of $\wG,$ we require that a simple path exists between any two states in $\Omega$. But note that this need not be true (for an easy counterexample take $C_4$ with $k = 2$ and try swapping colors $1$ and $2$). In order to ensure that we can apply the technique we expand our state space to $\widehat{\Omega} = [k]^{|\widehat{V}|}.$ We will extend the Glauber dynamics and flip dynamics chains to $\widehat{\Omega}$ such that if started at a proper coloring the chains only visit proper colorings, and if started at an improper coloring the chains eventually reach a proper coloring, since any monochromatic conflict can be resolved with positive probability. Thus any upper bound on the mixing time of the chains on the extended state space will be itself an upper bound on the mixing time $\Tmix$ of the natural chain on $\Omega.$ For this reason, from this point on we consider only the chains on $\widehat{\Omega}.$

\section{Warm-up: $k>6\Delta$ via Glauber}\label{section:glauber-analysis}

Our goal is to prove fast mixing of the Glauber dynamics for $k>6\Delta$, which is \cref{main:easy} of \cref{thm:main-theorem}.

As stated earlier, we will work via vertex colorings on $\wG.$ We first recall the definition of the Glauber dynamics chain for vertex $k$-colorings with expanded state space $\widehat{\Omega}$.

\begin{defn}[Glauber Dynamics]
    The Glauber Dynamics for vertex $k$-colorings on $\wG$ is the Markov Chain $(X_t)$ on the state space $\widehat{\Omega}$ where the transitions $X_t\rightarrow X_{t+1}$ are constructed from the following process.  For $X_t\in\widehat{\Omega}$:
    \begin{enumerate}
        \item Choose a vertex $v_t \in \widehat{V}$ and a color $c_t \in \{1, \ldots, k\}$ uniformly at random.
        \item For all $w \in \widehat{V},$ let 
        $$X_{t+1}(w) = \begin{cases} c_t & \text{ if } w = v_t \text{ and } c_t \not \in X_t(N(v_t)) \\ X_t(w) & \text{otherwise.}\end{cases}$$
    \end{enumerate}
\end{defn}

Note in particular that if $X_t \in \Omega$ then the condition in the second step is equivalent to requiring that $X_{t+1}\in\Omega$, i.e., that $X_{t+1}$ is a proper vertex $k$-coloring. The chain on $\Omega$ is ergodic for $k \geq 4\Delta - 2$ (recall, the Glauber dynamics is defined on the graph $\wG$ which has degree $\leq 4\Delta-4$), and the unique stationary distribution for the chain on $\Omega$ is the uniform distribution over $\Omega$, which are proper vertex $k$-colorings of $\wG$.
Furthermore, as alluded to in \cref{sub:expanded-state-space}, if $k \geq 4\Delta - 3$ then from an arbitrary $X_0 \in \widehat{\Omega}$ we will have $X_t \in \Omega$ for $t$ sufficiently large with probability $1$, hence the stationary distribution for the chain on $\widehat{\Omega}$ is also uniform over $\Omega$, and any upper bound on the mixing time of the Glauber dynamics on~$\widehat{\Omega}$ implies the same upper bound on the mixing time of the chain on~$\Omega$.

We proceed via a path coupling argument. Let $X_t,Y_t\in \widehat{\Omega}$ be a pair of (not necessarily proper) colorings where $X_t\oplus Y_t=\{v^{\ast}\}$ for some $v^{\ast}\in \widehat{V}$.  We provide the following quasimetric:

\begin{defn}
    Let the \emph{weight} of a vertex $w \in \widehat{V}$ be
    $$W(w) = \begin{cases} 2 & w \in E_1 \cap E_2 \\ 1 & \text{ otherwise.}\end{cases}.$$
    Note that we use $w$ to represent both a vertex in $\widehat{V}$ and the corresponding edge in $E_1 \cup E_2.$
    
    Let $\sigma$ and $\tau$ be vertex $k$-colorings of $\wG.$ Define the metric
    $$\wH(\sigma, \tau) = \sum_{v\in \widehat{V}: \sigma(v) \neq \tau(v)}W(v).$$
    Note that $\wH$ can be thought of as a weighted Hamming distance.
\end{defn}

\begin{prop}\label{prop:nbhd_weight}
Note that we have the bound
$$\sum_{w \in N(v)} W(w) \leq 4\Delta$$
for all vertices $v \in \wG$.
\end{prop}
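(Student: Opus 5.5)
We prove the bound by charging the weight of each neighbor of $v$ to the endpoints of $v$ in $G_1$ and $G_2$. Fix a vertex $v\in\widehat{V}$, which corresponds to an edge $v=\{x,y\}\in E_1\cup E_2$. A vertex $w\in\widehat{V}$ is a neighbor of $v$ in $\wG$ exactly when, for some $i\in\{1,2\}$, both $v$ and $w$ lie in $E_i$ and share an endpoint. We first rewrite the weight $W(w)$ of a neighbor: it equals the number of indices $i\in\{1,2\}$ with $w\in E_i$. Hence
$$\sum_{w\in N(v)} W(w) = \sum_{i=1}^{2} \bigl|\{w\in N(v) : w\in E_i\}\bigr|.$$
It therefore suffices to show that for each $i$, the number of neighbors $w$ of $v$ with $w\in E_i$ is at most $2\Delta$.

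Fix $i$. Any such $w$ is an edge of $E_i$ sharing an endpoint with $v$, so it is incident to $x$ or to $y$ in $G_i$, and $w\neq v$. Since $G_i$ has maximum degree $\Delta$, there are at most $2\Delta$ edges of $G_i$ incident to $x$ or $y$. When $v\in E_i$, the edge $v$ itself is among these, so the count is at most $2\Delta-2$.

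There is one subtlety, and it is the only step needing care. A neighbor $w\in E_i$ may be adjacent to $v$ only through the other graph. This happens when $w\in E_1\cap E_2$ and $v\notin E_i$. Even then, $w$ is an edge of $G_i$ touching $x$ or $y$: adjacency through $G_{3-i}$ still means $w$ shares an endpoint with $v$. So $w$ is still counted among the at most $2\Delta$ edges of $G_i$ at $x$ or $y$, and the bound $2\Delta$ for each $i$ holds in every case.

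Summing over $i=1,2$ gives $\sum_{w\in N(v)}W(w)\le 4\Delta$.
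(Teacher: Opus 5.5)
Your proof is correct and follows the same approach as the paper: write $W(w)=\bbone(w\in E_1)+\bbone(w\in E_2)$, split the weighted sum into the contributions from $E_1$ and from $E_2$, and bound each by $2\Delta$ using the degree bound at the two endpoints of $v$. Your explicit treatment of neighbors $w\in E_1\cap E_2$ when $v\notin E_i$ is welcome extra care, since the paper's phrase ``degree in $L(G_i)$'' does not literally apply to such $v$.
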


The proposition follows from the fact that we can consider the contribution from $E_1$ and $E_2$ independently to the sum (since a vertex with weight $2$ corresponds to an edge in $E_1 \cap E_2$), and the degree of a vertex as stated previously is bounded above by $2\Delta$ in $L(G_1)$ and $L(G_2).$

We need to construct a coupling for which 
$$\EE[\wH(X_{t+1}, Y_{t+1})] \leq \beta \wH(X_t,Y_t)$$
for some $\beta < 1$ in order to apply the path coupling theorem. We use Jerrum's coupling method \cite{Jerrum}. In particular, the coupling will be constructed as follows:
\begin{enumerate}
    \item We will choose pairs $(v_t,c_t)$ and $(v_t,c_t')$ for some $c_t',$ and transition $X_t \to X_{t+1}$ and $Y_t \to Y_{t+1}$ by attempting to recolor $v_t$ with $c_t$ and $c_t'$, respectively.
    
    Choose the pair $(v_t,c_t) \in \widehat{V} \times [k]$ uniformly at random. We describe the process by which we choose $c_t'.$

    \item If $v_t = v^{\ast}$ then set $c_t' = c_t.$

    \item If $v_t \in N(v^{\ast})$, we have two cases. The first is if $c_t \not \in \{X_t(v^{\ast}), Y_t(v^{\ast})\}.$ In this case set $c_t' = c_t.$
    Otherwise $c_t = X_t(v^{\ast})$ or $c_t = Y_t(v^{\ast}).$ If $c_t = X_t(v^{\ast})$ then set $c_t' = Y_t(v^{\ast}).$ Otherwise set $c_t' = X_t(v^{\ast}).$
    \item Finally if $v \neq v^{\ast}$ and $v \not\in N(v^{\ast})$ then set $c_t' = c_t.$
    \item Attempt to recolor $v_t$ by $c_t$ in $X_{t}$ and $v_t$ by $c_t'$ in $Y_{t}.$ 
\end{enumerate}

Note that each color $c_t'$ is attempted for exactly one vertex, so the coupling is a bijection and thus $Y_t \to Y_{t+1}$ is a faithful copy of the Glauber dynamics chain (hence this is a valid coupling).

The idea is to greedily pair the the moves which are not equal in both chains, so as to minimize the probability that we increase $\wH.$ To prove the theorem it remains to analyze the change in $\wH$.

Our argument will analyze the increase in $\wH$ per vertex $z \in \widehat{V}.$ 
We introduce some notation to describe this change.

\begin{defn}\label{defn:dist_increase_glauber}
    Let $z \in \widehat{V}$ be a vertex. We define the random variable
    \[
\alpha_z =
[\wH(X_{t+1},Y_{t+1})-\wH(X_t,Y_t)]
\cdot \bbone(v_t=z).
\]
\end{defn}
Consequently, 
    $$\EE[\wH(X_{t+1}, Y_{t+1}) - \wH(X_t,Y_t)] = \sum_{z \in \widehat{V}}\EE[\alpha_z].$$

We can prove fast mixing for the Glauber dynamics.

\begin{proof}[Proof of \cref{thm:main-theorem}, \cref{main:easy}]
We will show that
\[
\EE[\wH(X_{t+1},Y_{t+1})-\wH(X_t,Y_t)]
\le
-\frac{k-6\Delta}{mk}\,\wH(X_t,Y_t).
\]

    Note that if $v_t \not \in \{v^{\ast}\} \cup N(v^{\ast})$ then our coupling does not change $\wH$ so we only need to consider $v_t = v^{\ast}$ and $v_t \in N(v^{\ast})$, i.e.
    \begin{equation}\label{glauber_proof:1}
    \sum_{z \in \widehat{V}}\EE[\alpha_z] = \sum_{z \in \{v^{\ast}\} \cup N(v^{\ast})}\EE[\alpha_z].
    \end{equation}
    
    First we analyze the case $v_t = v^{\ast}.$ Note that by construction of our coupling the attempted updates to $v_t$ succeed or fail in both chains. A successful recoloring occurs with probability at least 
    $$\frac{k - |\{X_t(w) \colon w \in N(v^{\ast})\}|}{mk} \geq \frac{k - |N(v^{\ast})|}{mk}$$
    and yields 
    $$\wH(X_{t+1}, Y_{t+1}) = 0 = \wH(X_t, Y_t) - W(v^{\ast}).$$

    We thus conclude
    \begin{equation}\label{glauber_proof:2}
        \EE[\alpha_{v^{\ast}}] \leq -W(v^{\ast})\left(\frac{k - |N(v^{\ast})|}{mk}\right).  
    \end{equation}

    Now we look at $v_t = w \in N(v^{\ast}).$ Note that $\wH$ only increases when $c_t = Y_t(v^{\ast}),$
    since otherwise $c_t = c_t'$ and both re-colorings either succeed or fail, meaning that $\wH$ either stays constant or decreases. The case $c_t = Y_t(v^{\ast})$ occurs with probability $\frac{1}{mk}$, and increases $\wH$ by at most $W(w).$ We thus have
    \begin{equation}\label{glauber_proof:3}
        \EE[\alpha_{w}] \leq \frac{W(w)}{mk}.
    \end{equation}

    It follows that 
    \begin{align*} 
\lefteqn{    \EE[\wH(X_{t+1}, Y_{t+1}) - \wH(X_t,Y_t)]
}
\hspace{.5in}
\\
&= \sum_{z \in \{v^{\ast}\} \cup N(v^{\ast})}\EE[\alpha_z] &\text{by }\labelcref{glauber_proof:1}\\
    &\leq \frac{1}{mk}\left(-W(v^{\ast})(k - |N(v^{\ast})|) + \sum_{w \in N(v^{\ast})} W(w)\right) &\text{by }\labelcref{glauber_proof:2} \text{ and } \labelcref{glauber_proof:3}\\
    &\leq \frac{1}{mk}\left(-W(v^{\ast})(k - |N(v^{\ast})|) + 4\Delta\right). &\text{by }\cref{prop:nbhd_weight}.\\
    \end{align*}

    We now have two cases depending on $W(v^{\ast}).$ 
  First if \(W(v^{\ast})=1\), then \(|N(v^{\ast})|\le 2\Delta\), so
\[
\EE[\wH(X_{t+1},Y_{t+1})-\wH(X_t,Y_t)]
\le
\frac{1}{mk}(-k+6\Delta)
=
-\frac{k-6\Delta}{mk}.
\]

In the second case \(W(v^{\ast})=2\), we have \(|N(v^{\ast})|\le 4\Delta\), so
\[
\EE[\wH(X_{t+1},Y_{t+1})-\wH(X_t,Y_t)]
\le
\frac{2}{mk}(-k+6\Delta)
=
-\frac{2(k-6\Delta)}{mk}.
\]
Thus, in both cases,
\[
\EE[\wH(X_{t+1},Y_{t+1})-\wH(X_t,Y_t)]
\le
-\frac{k-6\Delta}{mk}\,\wH(X_t,Y_t).
\]
Equivalently,
\[
\EE[\wH(X_{t+1},Y_{t+1})]
\le
\left(1-\frac{k-6\Delta}{mk}\right)\wH(X_t,Y_t).
\]
Since \(k>(6+\delta)\Delta\), we have
\[
\frac{k-6\Delta}{k}
=
1-\frac{6\Delta}{k}
>
\frac{\delta}{6+\delta},
\]
hence
\[
\EE[\wH(X_{t+1},Y_{t+1})]
\le
\left(1-\frac{\delta}{(6+\delta)m}\right)\wH(X_t,Y_t).
\]

    Applying the Path Coupling Lemma (\cref{lemma:path-coupling}) where $\beta = 1-\frac{\delta}{m(6+\delta)}$, we obtain
    $$\Tmix(\eps) = O(m \log(n/\eps)),$$
    which completes the proof of the theorem for the Glauber dynamics.
\end{proof}

\begin{remark}
    When $k\geq 6\Delta+1$ then the above proof approach establishes  $\Tmix(\eps) = O(mk \log(m/\eps)),$ since $\beta = 1-\frac{1}{mk}$ in this case.
\end{remark}

\section{Main Result via the Flip Dynamics}\label{section:flip-theorem}

We aim to produce a Markov chain on the set of simultaneous edge colorings of $(G_1,G_2)$ which mixes rapidly  when $k$ is a bit below $6\Delta$. To do this, we consider the flip dynamics on~$\wG$. We define the chain on $\widehat{\Omega}.$

\begin{defn}\label{defn:clusters}
    Let $\sigma \in \widehat{\Omega}$ be a (not necessarily proper) vertex $k$-coloring of $\wG$, and let $v \in \widehat{V}, c \in [k].$ Define the cluster
    $$S_{\sigma}(v, c) = \{w \in \widehat{V} \colon w \text { is reachable from } v \text{ by a } (\sigma(v), c)\text{-alternating path}\}.$$
    We take $S_{\sigma}(v, \sigma(v)) = \{v\}.$ Note that if a color $c$ does not appear in $N(v),$ then $S_{\sigma}(v,c) = \{v\}$ as well.

    Let $\mcS_{\sigma}$ be the set of all clusters in $\sigma.$
\end{defn}

We now define the flip dynamics chain.

\begin{defn}[Flip Dynamics]
    Let $\{P_i\}_{i \in \NN}$ be a set of probabilities such that $P_0 = 0, P_1 = 1$ and $0 \leq P_i \leq 1$ for all $i \ge 1.$ We say that a Markov Chain $(X_t)$ on the state space $\widehat{\Omega}$ is a setting of the Flip Dynamics chain for vertex $k$-colorings on $\wG$ if the transitions $X_t \to X_{t+1}$ are constructed via the following process:
    \begin{enumerate}
        \item Choose a pair $(v_t,c_t) \in \widehat{V} \times [k]$ uniformly at random.
        \item Let $s = |S_{X_t}(v_t, c_t)|.$ With probability $P_s/s$ interchange colors $X_t(v_t)$ and $c_t$ on the cluster $S_{X_t}(v_t,c_t)$, and let $X_{t+1}$ be the resulting coloring. Otherwise let $X_{t+1}= X_t.$
    \end{enumerate}

    We say that $(X_t)$ is $\ell$-local if $P_i = 0$ for all $i > \ell.$
\end{defn}

This is a generalization of the Glauber dynamics, which can be considered as the special case $P_2 = P_3 = \ldots = 0$ (since the Glauber dynamics only flips clusters of size $1$). Furthermore, note that by definition any transition in the Glauber dynamics chain occurs with nonzero probability in an arbitrary flip dynamics chain, so the flip dynamics chain is ergodic for $k \geq 4\Delta-2$ (as in the case of the Glauber dynamics).

Finally note that if $(X_t)$ is a flip dynamics chain and $X_t \in \Omega,$ then $X_{t+1} \in \Omega$ (and furthermore for $t$ sufficiently large, $\Pr[X_t \in \Omega] = 1$ for any $X_0$), hence again (as explained in \cref{sub:expanded-state-space}) an upper bound on the mixing time of the chain on $\widehat{\Omega}$ implies the same upper bound on the mixing time of the chain on $\Omega$.

In the setting of \cref{main:hard} of \cref{thm:main-theorem}, we use the flip probabilities 
\begin{equation} 
P_1 = 1,P_2 = \frac{137}{650},P_3 = \frac{77}{650},P_4 = \frac{47}{650},P_5 = \frac{27}{650},P_6 = \frac{12}{650}, P_i = 0 \mbox{ for }i \geq 7.
\label{flip:setting}
\end{equation}

As noted before, previous works utilizing the flip dynamics \cite{Vigoda00,CDMPP19,CarlsonVigoda25} also use 6-local flip probabilities (i.e., $P_i=0$ for all $i> 6$).  As pointed out in \cite{CDMPP19}, Vigoda's analysis \cite{Vigoda00} for $k>(11/6)\Delta$ requires $P_3=1/6$.   In \cite{CarlsonVigoda25} they set $P_3<1/6$, in particular, they have $P_3=.154$.  Our analysis yields $P_3$ substantially smaller than $1/6$; this results from the different metric but with similar extremal cases as in previous works, see \cref{rmk:worst_cases} for further discussion. 

\section{Analysis of the Flip Dynamics}\label{section:flip-analysis}
\subsection{Coupling Definition}
Let $X_t, Y_t \in \widehat{\Omega}$ be states in our flip dynamics chain such that $X_t \sim Y_t.$ Let $\{v^{\ast}\} = X_t \oplus Y_t$ be the vertex at which the colorings $X_t$ and $Y_t$ disagree. 

We now describe the coupling for moves in our Markov chain. For simplicity, we will consider the equivalent chain which at a state $\sigma \in \widehat{\Omega}$ chooses a cluster in $\mcS_{\sigma}$ according to the distribution in which $S_{\sigma}(v,c)$ (with $|S_{\sigma}(v,c)| = s$) is chosen with probability $\frac{s}{mk},$ and flips it with probability $P_{s}/s$ (note that this version of the chain is not efficiently implementable, but is equivalent for the purpose of analyzing the mixing time). Using this model, to provide a coupling it suffices to produce a fractional matching between the sets $\mcS_{X_t}$ and $\mcS_{Y_t}.$ 

We use the coupling of \cite{Vigoda00}, which greedily maximizes the overlap between matched clusters. Our coupling will be the identity coupling on $\mcS_{X_t} \cap \mcS_{Y_t},$ so it will be useful to describe the set $\mcS_{X_t} \oplus \mcS_{Y_t}$ of clusters which differ in $X_t$ and $Y_t.$ Since $X_t$ and $Y_t$ differ only at~$v^{\ast},$ these are precisely the components whose structure depends on the color at $v^*$, i.e.
\begin{multline*}
\mcD = \{S_{X_t}(v^{\ast}, X_t(w)) \colon w \in N(v^{\ast})\} \cup \{S_{Y_t}(v^{\ast}, X_t(w)) \colon w \in N(v^{\ast})\} \\ \cup \{S_{X_t}(w, Y_t(v^{\ast})) \colon w \in N(v^{\ast})\} \cup \{S_{Y_t}(w, X_t(v^{\ast})) \colon w \in N(v^{\ast})\}.
\end{multline*}

In simpler language, these are the clusters which involve a neighbor of $v^{\ast}$ and either $X_t(v^{\ast})$ or $Y_t(v^{\ast})$ as the other color. As a result, our coupling and analysis will be done by considering each color in the neighborhood of $v^{\ast}$, since for each color in the neighborhood there will be a cluster in either $X_t$ or $Y_t$ containing all neighbors with that color. 

Our goal will be to bound the increase in $\wH$ per color in the neighborhood of $v^{\ast},$ which will help us to establish that our coupling is a contraction mapping with respect to this metric. However, our introduction of a weighted structure on $\wG$ means that simply minimizing the increase in $\wH$ per color (as in the classical vertex colorings argument) is insufficient, as the degree constraints and total neighborhood weight of a vertex $v \in \widehat{V}$ depend on $W(v^{\ast})$ itself. 
Instead, we prove per-color bounds on the increase in $\wH$ relative to the total weight $W_c$ of the neighbors of $v^{\ast}$ with color $c,$ and then combine them with \cref{prop:nbhd_weight}. 

We are now ready to formally describe the coupling. We aim to describe the process which defines the transition $X_t \to X_{t+1}, Y_t \to Y_{t+1}.$ We will do this by describing for each color which components in $\mcD$ are matched with each other.

\begin{defn}
For a color $c \in [k],$ let $d_c = |\{w \in N(v^{\ast}) \colon X_t(w) = c\}|$ be the number of neighbors of $v^{\ast}$ with color $c.$ 
\end{defn}
We start by describing the coupling for the case $d_c = 1,$ and then describe the coupling for $d_c \geq 2.$ Fix $c \in [k]$ such that $d_c = 1.$ Note that in both the definitions the probabilities we give must be normalized by multiplication by $\frac{1}{mk}$; we provide the probabilities before normalization to simplify the definition. 

In this case we have a unique $w \in N(v^{\ast})$ with $X_t(w) = Y_t(w) = c$ and the clusters in $\mcD$ involving $c$ are the following.  In chain $X_t$ we have the pair of clusters $S_{X_t}(v^{\ast}, c)$ and $S_{X_t}(w,Y_t(v^{\ast}))$; and in chain $Y_t$ we have the pair of clusters $S_{Y_t}(v^{\ast},c)$ and $S_{Y_t}(w,X_t(v^{\ast}))$. 

\begin{enumerate}
    \item Let $A = |S_{X_t}(v^{\ast},c)|$ and $B = |S_{Y_t}(v^{\ast},c)|$.  
    \\ Note, $|S_{X_t}(w,Y_t(v^{\ast}))|=B-1$ and $|S_{Y_t}(w,X_t(v^{\ast}))|=A-1$. 
    \item \label{flip-both-X}
    Flip $S_{X_t}(v^{\ast},c)$ in $X_t$ and $S_{Y_t}(w,X_t(v^{\ast}))$ in $Y_t$ with probability $P_A$.
    \item \label{flip-both-Y}
    Flip $S_{X_t}(w,Y_t(v^{\ast}))$ in $X_t$ and $S_{Y_t}(v^{\ast},c)$ in $Y_t$ with probability $P_B$.
    \item Flip $S_{X_t}(w,Y_t(v^{\ast}))$ in $X_t$ and $S_{Y_t}(w,X_t(v^{\ast}))$ in $Y_t$ with probability $\min(P_{A-1} - P_A, P_{B-1} - P_B).$
    \item \label{flip-only-X}
    Flip $S_{X_t}(w,Y_t(v^{\ast}))$ in $X_t$ with probability $(P_{B-1} - P_B) - \min(P_{A-1} - P_A, P_{B-1} - P_B).$
    \item \label{flip-only-Y} 
    Flip $S_{Y_t}(w, X_t(v^{\ast}))$ in $Y_t$ with probability $(P_{A-1} - P_A) - \min(P_{A-1} - P_A, P_{B-1} - P_B).$
\end{enumerate}

Note that the component $S_{Y_t}(w,X_t(v^{\ast}))$ is $S_{X_t}(v^{\ast},c) \setminus \{v^{\ast}\},$ so what we are doing is flipping the large component in $X_t$ through $w$ and $v^{\ast},$ and then flipping the same component (with the exception of $v^{\ast}$) in $Y_t,$ so that on the overlap the components remain the same. The flip of the large component in $Y_t$ is analogously matched with the smaller component in $X_t.$ Now the smaller components are flipped with higher probability than the larger ones, so there will some remaining probability for both the smaller components. We greedily match these two components together, saving the increase at $w.$ Finally, with remaining probability we match the other components arbitrarily.

The case $d_c \geq 2$ is analogous, except we have clusters which involve $d_c$ neighbors in $X_t$ and $Y_t$; we choose to match the cluster in $X_t$ with the largest of the $d_c$ components corresponding to the neighbors in $Y_t$, and the cluster in $Y_t$ with the largest of the clusters corresponding to the neighbors in $X_t.$ Again we greedily match the clusters corresponding to the same neighbor, and then arbitrarily match the remaining ones.

    \begin{enumerate}
        \item \label{coupling:1} Let $A = |S_{X_t}(v^{\ast},c)|$ and $B = |S_{Y_t}(v^{\ast},c)|.$ Index the neighbors $w \in N(v^{\ast}) \cap \{u \in \widehat{V} \colon X_t(u) = c\}$ by $w_1, w_2, \ldots, w_{d_c}.$ Let $a_i = |S_{Y_t}(w_i, X_t(v^{\ast}))|$ and $b_i = |S_{X_t}(w_i, Y_t(v^{\ast}))|.$ If multiple neighbors $w_{i_1}, \ldots, w_{i_k}$ are such that the clusters $S_{Y_t}(w_{i_j}, X_t(v^{\ast}))$ coincide, then set $a_{i_j}= 0$ for $j \neq 1.$ Similarly, if $w_{i_1}, \ldots, w_{i_k}$ are such that the clusters $S_{X_t}(w_{i_j}, Y_t(v^{\ast}))$ coincide, then set $b_{i_j}= 0$ for $j \neq 1.$  With this definition, we have
        $$A = 1 + \sum_{i \colon a_i \neq 0}a_i,$$
        and
        $$B = 1 + \sum_{i \colon b_i \neq 0}b_i.$$ Furthermore, let $m_a = \arg\max_i a_i$ and $m_b = \arg\max_i b_i.$
        \item \label{coupling:2} With probability $P_{A}$ flip the component $S_{X_t}(v^{\ast},c)$ in $X_t$ and $S_{Y_t}(w_{m_a}, X_t(v^{\ast}))$ in $Y_t$.
        \item \label{coupling:3} With probability $P_{B}$ flip the component $S_{Y_t}(v^{\ast},c)$ in $Y_t$ and $S_{X_t}(w_{m_b}, Y_t(v^{\ast}))$ in $X_t$.
        \item For all \(i=1,\ldots,d_c\):
        \begin{enumerate}
        \item Let $q_i = P_{a_i} - \bbone(i = m_a) \cdot  P_A$ and $q_i' = P_{b_i} - \bbone(i = m_b) \cdot P_B.$ 
        \item \label{coupling:4} Flip $S_{Y_t}(w_i, X_t(v^{\ast}))$ in $Y_t$ and $S_{X_t}(w_i, Y_t(v^{\ast}))$ in $X_t$ with probability $\min(q_i, q_i').$
        \item \label{coupling:5} Flip $S_{Y_t}(w_i, X_t(v^{\ast}))$ in $Y_t$ with probability $q_i - \min(q_i, q_i').$
        \item \label{coupling:6} Flip $S_{X_t}(w_i, Y_t(v^{\ast}))$ in $X_t$ with probability $q_i' - \min(q_i, q_i').$
        \end{enumerate}
    \end{enumerate}

\subsection{Coupling Analysis}
To prove our theorem, we aim to bound the increase in $\wH$ over all transitions $X_t \to X_{t+1}, Y_t \to Y_{t+1}$ with $X_t \sim Y_t.$ In the proof of rapid mixing of the Glauber dynamics we analyzed the change in $\wH$ per neighbor of $v^{\ast}.$ In the case of the flip dynamics, we need to analyze the coupling in a per color manner due to the coupling definition. We thus define an analogous notion to \cref{defn:dist_increase_glauber} used in the proof of \cref{thm:main-theorem}, \cref{main:easy} for the Glauber dynamics.

\begin{defn}
Let \(c\in[k]\). Define the random variable
\[
\alpha_c =
\bigl(\wH(X_{t+1},Y_{t+1})-\wH(X_t,Y_t)\bigr)
\cdot
\bbone\bigl((v_t,c_t)=(v,c)\text{ for some }v\in\widehat V\bigr).
\]
\end{defn}

With this definition,
\[
\EE[\wH(X_{t+1},Y_{t+1})-\wH(X_t,Y_t)]
=
\sum_{c=1}^k \EE[\alpha_c].
\]

To bound the increase in $\wH$ over all configurations we will work only with \cref{prop:nbhd_weight}. The analogous argument to the Glauber dynamics would be to work with $|N(v^{\ast})|,$ but this can be either up to $2\Delta$ or $4\Delta$ depending on $W(v^{\ast}),$ so the result does not follow directly from this approach.

If we bound the maximum increase in $\wH$ per unit weight, we get a bound overall for the increase in $\wH$ after one move of the chain, by multiplying this increase by the $4\Delta$ upper bound. We introduce some notation to make this notion precise.
\begin{defn}
    Fix $c \in [k].$ Let 
    $$W_c = \sum_{w \in N(v^{\ast}) : \atop X_t(w) = c}W(w).$$
\end{defn}

To complete the argument it remains to bound $\EE[\alpha_c]$ for each $c \in [k].$ 
We first describe a set of conditions under which our results hold. These properties in particular hold for our choice of the flip probabilities.

\begin{prop}~\label{prop:weight-properties}
    \begin{enumerate}
    \item \label{property:constdif} The inequality
    $$(i-1)(P_{i} - P_{i+1}) \leq (P_2 - P_3)$$
    holds for all $2 \leq i \leq 6.$
    \item \label{property:maxat1} We have
    $$(W(w) + 2\ell(i-1))(P_{i} - P_{i+1}) \leq W(w)(P_1 - P_2)$$
    for all $2 \leq i \leq 6,$ $W(w) \in \{1,2\},$ $\ell \in \{1,2\}.$
    \item \label{property:decreasing}
    $P_i \geq P_{i+1} + P_{i+2}$ for all $i.$
    \item \label{property:prod_decreasing} 
    $iP_i \geq (i+1)P_{i+1}$ for all $i.$
    \item\label{property:maxat3} $(i-1)P_i \leq 2P_3$ for all $i.$
    \item\label{property:maxat4} $(2i-5)P_i \leq 3P_4$ for all $i.$
    \item\label{property:ineq1} $P_1 + P_2 - 2P_3 \leq \frac{3}{4} + 2P_3.$
    \item\label{property:ineq2} $8P_3 \leq P_1 + P_2 - 2P_3.$
    \item\label{property:ineq3} $\frac{1}{4} + 6P_3 \leq \frac{3}{4} + 2P_3.$
\end{enumerate}
\end{prop}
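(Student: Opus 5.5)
The plan is to verify each item directly for the setting \eqref{flip:setting}. Every inequality is linear in the $P_i$, so I will multiply through by $650$ and work with the integers
\[
p_1=650,\; p_2=137,\; p_3=77,\; p_4=47,\; p_5=27,\; p_6=12,\; p_i=0 \text{ for } i\ge 7.
\]
Terms with $i\ge 7$ vanish, so every item that ranges over all $i$ reduces to a finite check on $1\le i\le 6$. In those items the only inequality with a zero term on the left that still needs a moment's thought is \cref{property:decreasing} at $i=6$, which reads $12\ge 0$. The one subtlety in the finite range is $i=6$ in \cref{property:constdif,property:maxat1}, which involves $P_7=0$.

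First I will tabulate the consecutive differences $p_i-p_{i+1}$, which are $513, 60, 30, 20, 15, 12$ for $i=1,\dots,6$.

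\begin{itemize}
\item For \cref{property:constdif}, the products $(i-1)(p_i-p_{i+1})$ for $i=2,\dots,6$ all equal exactly $60=p_2-p_3$. The parameters were evidently tuned to make this tight, so the item holds with equality.
\item For \cref{property:maxat1}, the binding case is $W(w)=1$, $\ell=2$, since the left side grows in $\ell$ and the right side doubles when $W(w)=2$. The values $(1+4(i-1))(p_i-p_{i+1})$ are $300, 270, 260, 255, 252$, all at most $513$. For $W(w)=2$ the left side at most doubles up to the additive constant, while the right side is $1026$, so the item follows.
\end{itemize}

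Next I will check the monotonicity-type items.

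\begin{itemize}
\item For \cref{property:decreasing}: $650\ge 214$, $137\ge 124$, $77\ge 74$, $47\ge 39$, $27\ge 12$, and $12\ge 0$.
\item For \cref{property:prod_decreasing}, the sequence $ip_i$ is $650, 274, 231, 188, 135, 72, 0$, which is decreasing.
\item For \cref{property:maxat3}, the values $(i-1)p_i$ are $0, 137, 154, 141, 108, 60$, all at most $2p_3=154$.
\item For \cref{property:maxat4}, the values $(2i-5)p_i$ are negative for $i\le 2$ and equal $77, 141, 135, 84$ for $i=3,\dots,6$, all at most $3p_4=141$.
\end{itemize}

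In both of the last two items the maximum is attained exactly at the named index.

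Finally, the three scalar inequalities.

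\begin{itemize}
\item For \cref{property:ineq1}: $p_1+p_2-2p_3=633\le 487.5+154$.
\item For \cref{property:ineq2}: $8p_3=616\le 633$.
\item \cref{property:ineq3} is equivalent to $P_3\le 1/8$, i.e.\ $308\le 325$ after scaling.
\end{itemize}

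No step is a genuine obstacle. The only care needed is handling the boundary indices consistently, using $P_7=0$ and noting which items range over all $i$ rather than $2\le i\le 6$. The checks that are tight or nearly tight, namely \cref{property:constdif,property:maxat3,property:maxat4}, deserve exact rather than approximate arithmetic.
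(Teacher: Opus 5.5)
Your verification is correct and matches the paper's own (implicit) argument: the paper gives no separate proof and simply asserts that these inequalities hold for the setting \eqref{flip:setting}, and your scaled integer checks with $p_i=650P_i$ are all accurate. One small remark: the flip dynamics definition sets $P_0=0$, so \cref{property:decreasing,property:prod_decreasing} would fail at $i=0$, which means the reading $i\ge 1$ that you adopt is the intended (and necessary) one.
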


 In the upcoming analysis, the cases \(c=X_t(v^*)\) and \(c=Y_t(v^*)\), which potentially arise in the case of improper colorings, are included in the
bounds below. In these cases one of the corresponding large components
may be the trivial component \(\{v^*\}\), but the same greedy matching
argument applies, as in \cite{Vigoda00}, and gives no larger
contribution than the bounds stated in the following lemmas.

The following lemmas bound the expected change in $\wH$ on a per color basis in terms of~$d_c.$ 

\begin{lemma}\label{lemma:dc0}
    If $d_c = 0$ then 
    $mk\,\EE[\alpha_c] \leq -W(v^{\ast}).
    $
\end{lemma}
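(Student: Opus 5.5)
The plan is to identify exactly which moves counted by $\alpha_c$ can change $\wH$ when $d_c=0$, and then verify that each such move decreases $\wH$ by precisely $W(v^{\ast})$.

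Since $d_c=0$, no neighbor of $v^{\ast}$ has color $c$ in $X_t$. Because $X_t$ and $Y_t$ agree off $v^{\ast}$, the same holds in $Y_t$.

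First I will determine which clusters with second color $c$ lie in $\mcD$. By the description of $\mcD$, its clusters are of two kinds.
\begin{itemize}
\item Clusters $S_{X_t}(v^{\ast},X_t(w))$ or $S_{Y_t}(v^{\ast},X_t(w))$ with $w\in N(v^{\ast})$. Here the second color is a color appearing in $N(v^{\ast})$, so it cannot equal $c$.
\item Clusters $S_{X_t}(w,Y_t(v^{\ast}))$ or $S_{Y_t}(w,X_t(v^{\ast}))$. These involve the colors $X_t(v^{\ast})$ and $Y_t(v^{\ast})$, which are handled by the color index $c'=X_t(w)\neq c$.
\end{itemize}
So no cluster in $\mcD$ is charged to $c$, with one exception: the clusters $S_{X_t}(v^{\ast},c)$ and $S_{Y_t}(v^{\ast},c)$. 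Since $c$ does not appear in $N(v^{\ast})$, \cref{defn:clusters} gives $S_{X_t}(v^{\ast},c)=S_{Y_t}(v^{\ast},c)=\{v^{\ast}\}$. This includes the degenerate cases $c\in\{X_t(v^{\ast}),Y_t(v^{\ast})\}$, where the cluster is $\{v^{\ast}\}$ by convention.

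Every other choice $(v,c)$ with $v\neq v^{\ast}$ yields a cluster that is identical in the two chains. To see this, note that a $(X_t(v),c)$-alternating component through $v$ can only change between the chains if it contains $v^{\ast}$ or a neighbor of $v^{\ast}$ whose other color is $X_t(v^{\ast})$ or $Y_t(v^{\ast})$. The first possibility is excluded because $v^{\ast}$ would then need a neighbor of color $c$ or of color $X_t(v)$. The second possibility would mean the cluster is one of the $\mcD$-clusters above. Under the identity coupling, flipping a common cluster that avoids $v^{\ast}$ does not change the disagreement set. So all these moves contribute $0$ to $\alpha_c$.

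The remaining move is the choice $(v^{\ast},c)$, which selects the singleton cluster $\{v^{\ast}\}$ of size $s=1$ in both chains. This happens with probability $\frac{1}{mk}$, and the move is performed with probability $P_1=1$, recoloring $v^{\ast}$ to $c$ in both chains. After this move $X_{t+1}=Y_{t+1}$, so $\wH$ drops from $W(v^{\ast})$ to $0$. When $c$ equals one of the old colors at $v^{\ast}$, the move still sets both chains to color $c$ at $v^{\ast}$, so the conclusion is the same. Hence $mk\,\EE[\alpha_c]=-W(v^{\ast})$, and in particular $mk\,\EE[\alpha_c]\le -W(v^{\ast})$.

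The main care point is the bookkeeping claim that no cluster in $\mcD$ other than the singleton at $v^{\ast}$ is attributed to color $c$. This follows from the list defining $\mcD$ together with the convention that a move $(v,c_t)$ is charged to $c_t$.
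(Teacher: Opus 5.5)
Your core argument is the paper's: with $d_c=0$ one has $S_{X_t}(v^{\ast},c)=S_{Y_t}(v^{\ast},c)=\{v^{\ast}\}$. This singleton is selected with probability $\frac{1}{mk}$ and flipped with probability $P_1=1$ in both chains, after which the chains coincide, so $\wH$ drops by $W(v^{\ast})$. Your handling of the degenerate values $c\in\{X_t(v^{\ast}),Y_t(v^{\ast})\}$ for this move is also correct.

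The extra bookkeeping, which the paper does not attempt, contains a false step. You claim that every choice $(v,c)$ with $v\neq v^{\ast}$ yields the same cluster in both chains. Your reason is that a cluster containing $v^{\ast}$ would force $v^{\ast}$ to have a neighbor of color $c$ or of color $X_t(v)$, but only the first alternative is excluded by $d_c=0$. For a concrete failure, suppose $X_t$ and $Y_t$ are both proper at $v^{\ast}$, take $c=X_t(v^{\ast})$ (so $d_c=0$), and take $v=w\in N(v^{\ast})$ with $c'=X_t(w)$. In $X_t$ the $(c',c)$-component through $w$ contains $v^{\ast}$; it is $S_{X_t}(v^{\ast},c')$. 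In $Y_t$ it is $S_{Y_t}(w,X_t(v^{\ast}))$, which does not contain $v^{\ast}$. The case $c=Y_t(v^{\ast})$ is symmetric. So these moves do produce differing clusters, and under a literal reading of $\alpha_c$ as ``all moves with $c_t=c$'' your argument leaves them unaccounted for.

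What rescues the lemma is the convention you invoke only in your second bullet. These are $\mcD$-clusters, matched by the coupling for the neighborhood color $c'$ and charged to $c'$, so the $d_c=0$ class consists of the single move at $v^{\ast}$. You should state this per-neighborhood-color charging up front and apply it here, instead of asserting that the clusters coincide. This is what the paper's short proof does implicitly; the coupling pairs clusters rather than $(v,c)$ choices, so the literal ``$c_t=c$'' split is not the right one anyway.
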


\begin{lemma}\label{lemma:dc-positive}
If $d_c\ge 1$, then
\[
mk\,\EE[\alpha_c]
\le
A(v^{\ast})W_c+W(v^{\ast})(d_c-1),
\]
where
\[
A(v^{\ast})=
\begin{cases}
\frac34+2P_3 & \mbox{if } W(v^{\ast})=1 \\
P_1+P_2-2P_3 & \mbox{if } W(v^{\ast})=2.
\end{cases}
\]
\end{lemma}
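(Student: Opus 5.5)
We follow Vigoda's per-color accounting, adapted to the weighted metric $\wH$. Fix $c$ with $d_c\ge1$ and write $a=X_t(v^{\ast})$, $b=Y_t(v^{\ast})$. We go through the moves of the coupling for color $c$ and record, for each, the change in $\wH$ times its unnormalized probability.

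In the matched moves \labelcref{coupling:2,coupling:3}, the two flipped clusters agree after the flip except at $v^{\ast}$ and at the vertices of the non-maximal clusters $S(w_i,\cdot)$, $i\ne m_a$ (resp.\ $i\ne m_b$). So they remove $W(v^{\ast})$ and add the weight of $\bigcup_{i\neq m_a} S_{Y_t}(w_i,a)$, respectively $\bigcup_{i\ne m_b}S_{X_t}(w_i,b)$.

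In the paired moves \labelcref{coupling:4}, flipping $S_{Y_t}(w_i,a)$ and $S_{X_t}(w_i,b)$ together leaves $\wH$ unchanged at $w_i$. It adds the weight of the symmetric difference of the two clusters minus $w_i$, which is at least $a_i-1$ and $b_i-1$ vertices' worth. In the unpaired moves \labelcref{coupling:5,coupling:6}, a single flip adds the full weight of its cluster.

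The key counting point is the following. A cluster $S_{Y_t}(w_i,a)$ is an $(a,c)$-alternating component in $\wG$. Its weight is at most $W(w_i)+2(a_i-1)$, and in the extremal configurations it is $W(w_i)+2\ell(a_i-1)$ with $\ell\in\{1,2\}$. This is exactly the shape in Property \labelcref{property:maxat1} of \cref{prop:weight-properties}. For $W(v^{\ast})=1$, $v^{\ast}$ lies in only one of $E_1,E_2$. We will use this to see that the clusters through $w_i$ continuing beyond $v^{\ast}$ are more constrained, which is what separates the constants $\tfrac34+2P_3$ and $P_1+P_2-2P_3$.

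Step 1 treats $d_c=1$ with neighbor $w$ and sizes $A,B$. The expected change is
\[
-W(v^{\ast})(P_A+P_B)+(\text{weight of }S_{Y_t}(w,a)\setminus\{w\})\cdot\min(q,q')+W(S\text{-excess})\cdot|q-q'|+\cdots.
\]
We bound it by maximizing over $A,B\le 7$, using that $P_i=0$ beyond $6$. Here we invoke Properties \labelcref{property:constdif}, \labelcref{property:decreasing} and \labelcref{property:prod_decreasing} to reduce to a few extremal cases: $(A,B)=(1,1)$ gives $W(w)(P_1-P_2)$-type terms, $(2,2)$ and $(3,3)$ give terms in $P_2,P_3$. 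Properties \labelcref{property:maxat3,property:ineq1,property:ineq2,property:ineq3} then show that the maximum ratio to $W_c=W(w)$ is at most $A(v^{\ast})$. We expect the two weight cases to yield $\tfrac34+2P_3$ (with $1/4+6P_3$ dominated via \labelcref{property:ineq3}) and $P_1+P_2-2P_3$ (with $8P_3$ dominated via \labelcref{property:ineq2}).

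Step 2 treats $d_c\ge2$ by reduction. Each non-maximal neighbor $w_i$ contributes, through its unpaired or paired flips and through its inclusion in the big-cluster flip, at most $A(v^{\ast})W(w_i)$ as in Step 1. The big-cluster moves lose only $W(v^{\ast})(P_A+P_B)$, which is at most $2W(v^{\ast})$. Compared to $d_c=1$, the loss from $v^{\ast}$ is at most one unit of $W(v^{\ast})$ per extra neighbor, which produces the slack term $W(v^{\ast})(d_c-1)$. Properties \labelcref{property:maxat3,property:maxat4} control $(i-1)P_i$ and $(2i-5)P_i$, which arise when $A$ is a sum of several $a_i$. The coincidence convention ($a_{i_j}=0$) ensures that no cluster weight is double counted.

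The main obstacle is Step 1, the exhaustive extremal case check with weights. Weighted cluster sizes replace cardinalities, so the worst cases now depend on how many vertices of each cluster are double-edges. We would first bound every cluster's weight by $W(w)+2(\text{size}-1)$ and verify that the resulting inequalities reduce exactly to \cref{prop:weight-properties}. Only if that bound is too lossy for $W(v^{\ast})=1$ would we exploit that $v^{\ast}$ then has at most $2\Delta$ neighbors.
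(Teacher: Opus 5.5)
Your skeleton matches the paper's: per-color accounting in Vigoda's style, cluster weights bounded by $W(w_i)+2(\text{size}-1)$, and a split into $d_c=1$ and $d_c\ge2$. However, the move-by-move accounting that both of your steps rest on is wrong.

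\textbf{The accounting of the moves.}
\begin{itemize}
\item In moves \labelcref{coupling:2,coupling:3}, $v^{\ast}$ is recolored to $c$ in one chain only, so it stays a disagreement. These moves never remove $W(v^{\ast})$. For $d_c=1$ they change $\wH$ by exactly $0$. For $d_c\ge2$ they only add the unmatched clusters $i\ne m_a$ (resp.\ $i\ne m_b$).
\item In the paired move \labelcref{coupling:4}, $w_i$ is sent to $Y_t(v^{\ast})$ in one chain and to $X_t(v^{\ast})$ in the other, so it does become a disagreement. The same holds for shared $c$-colored vertices, so the new disagreements are the union of the two clusters, not the symmetric difference. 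The only gain from pairing is that $W(w_i)$ is paid once instead of twice, which is the $\max(q_i,q_i')W(w_i)$ term in $f(w_i)$.
\end{itemize}
So the term $-W(v^{\ast})(P_A+P_B)$ in your Step 1 is spurious. It makes the estimate too optimistic to establish an upper bound. The only decrease at $v^{\ast}$ comes from colors with $d_c=0$ (\cref{lemma:dc0}). The term $W(v^{\ast})(d_c-1)$ is an allowance created by the free-color count $k-D+R$ in the proof of the theorem, not a loss occurring among color $c$'s moves.

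\textbf{Why Step 2 fails.} That allowance is genuinely needed, and your Step 2 reduction is false without it. Take $W(v^{\ast})=1$, $d_c=2$, $W(w_1)=W(w_2)=2$, $a_1=a_2=3$, $b_1=b_2=1$ (so $A=7$, $B=3$), with $m_a=m_b=1$. The coupling gives $mk\,\EE[\alpha_c]=4+8P_3=(\tfrac34+2P_3)W_c+1$. The non-maximal neighbor $w_2$ alone contributes $2+6P_3>A(v^{\ast})W(w_2)$. The total excess is absorbed exactly by $W(v^{\ast})(d_c-1)=1$.

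\textbf{Where the constants actually come from.} This example also shows that your Step 1 looks for the two constants in the wrong case.
\begin{itemize}
\item For $d_c=1$, the paper obtains $(P_1+P_2-2P_3)W_c$ for both values of $W(v^{\ast})$, from \cref{property:constdif,property:maxat1} alone, and then applies \cref{property:ineq1}.
\item $\tfrac34+2P_3$ arises only from the $d_c=2$, $W(v^{\ast})=1$ case above.
\item $8P_3$, together with \cref{property:ineq2}, comes from $d_c\ge2$ with $W(v^{\ast})=2$, where the allowance is $2(d_c-1)$.
\item $\tfrac14+6P_3$ comes from $d_c\ge3$ with $W(v^{\ast})=1$.
\end{itemize}
No constraint on clusters through $v^{\ast}$ when $W(v^{\ast})=1$ is used anywhere.

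\textbf{What is missing.} You need the $d_c\ge2$ argument showing that the excess over $A(v^{\ast})W_c$ is at most $W(v^{\ast})(d_c-1)$. The paper obtains this in four steps:
\begin{itemize}
\item it uses the savings $2P_A(a_{m_a}-1)+2P_B(b_{m_b}-1)$ of $\sum_\ell f(w_\ell)$ relative to $\sum_\ell g(w_\ell)$;
\item this yields the resulting bound \eqref{eq:d_cgeq2bound};
\item it divides by $W_c\ge d_c$;
\item when $d_c=2$, it does an explicit case analysis over $W(w_1),W(w_2)$.
\end{itemize}
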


We will prove the above lemmas in \cref{section:proofs}. We first present the proof of our main result.

\begin{proof}[Proof of \cref{main:hard} of \cref{thm:main-theorem}]
Denote the degree of vertex $v^{\ast}$ as
$$D := |N(v^{\ast})| =\sum_{c = 1}^{k}d_c.$$
Let 
$$R:=\sum_{c:d_c\geq 1}(d_c-1).$$
Then, the number of colors appearing in the neighborhood of $v^{\ast}$ is $|\{c:d_c>0\}|=D-R$, and hence the number of available colors for $v^{\ast}$ is \begin{equation}
    \label{eqn:avail}
    |\{c:d_c = 0\}|= k - (D - R).
\end{equation}

Considering first the colors $c$ where $d_c=0$, we have by \cref{lemma:dc0} that
\[
mk\sum_{c:d_c=0}\EE[\alpha_c]
\le
-W(v^*)(k-D+R).
\]
For $d_c\geq 1$, by \cref{lemma:dc-positive},
\[
mk\,\EE[\alpha_c]
\le
A(v^*)W_c+W(v^*)(d_c-1).
\]
Combining these bounds gives
\begin{align*}
mk\,\EE[\wH(X_{t+1},Y_{t+1})-\wH(X_t,Y_t)]
&=
mk\sum_{c=1}^k\EE[\alpha_c]\\
&\le
-W(v^*)(k-D+R)
+\sum_{c:d_c\ge1}
\left(A(v^*)W_c+W(v^*)(d_c-1)\right)\\
&=
-W(v^*)(k-D+R)
+
A(v^*)\sum_{c:d_c\ge1}W_c
+
W(v^*)R\\
&=
-W(v^*)(k-D)
+
A(v^*)\sum_{w\in N(v^*)}W(w)\\
&\le
-W(v^*)k+W(v^*)D+4A(v^*)\Delta.
\end{align*}

If $W(v^*)=1$, then \(D\le2\Delta\), hence
\begin{align}
\nonumber
mk\,\EE[\wH(X_{t+1},Y_{t+1})-\wH(X_t,Y_t)]
& \le
-k+2\Delta+4\left(\frac34+2P_3\right)\Delta
\\ & \leq
-k+\frac{1933}{325}\Delta,
    \label{bound:W1}
\end{align}
where the last inequality follows from 
\[
2+4\left(\frac34+2P_3\right)
=
2+4\left(\frac34+\frac{154}{650}\right)
=
2+4\cdot\frac{1283}{1300}
=
\frac{1933}{325}.
\]

If \(W(v^*)=2\), then \(D\le4\Delta\) and \(A(v^*)=P_1+P_2-2P_3\), hence
\begin{align}
\nonumber
mk\,\EE[\wH(X_{t+1},Y_{t+1})-\wH(X_t,Y_t)]
& \le
2\left(-k+\left(4+2(P_1+P_2-2P_3)\right)\Delta\right)
\\ & \leq
2\left(-k+\frac{1933}{325}\Delta\right),
    \label{bound:W2}
\end{align}
where the last inequality follows from 
\[
4+2(P_1+P_2-2P_3)
=
4+2\left(1+\frac{137}{650}-2\cdot\frac{77}{650}\right)
=
4+2\cdot\frac{633}{650}
=
\frac{1933}{325}.
\]

Combining \cref{bound:W1,bound:W2}, we conclude the following:
\[
\EE[\wH(X_{t+1},Y_{t+1})]
\le
\left(1-\frac{k-\frac{1933}{325}\Delta}{mk}\right)\wH(X_t,Y_t).
\]
Since $5.948-\frac{1933}{325}=\frac{1}{3250}$, for $k\ge 5.948\Delta$ we find
\[
\EE[\wH(X_{t+1},Y_{t+1})]
\le
\beta\wH(X_t,Y_t),
\] where $\beta =  1- \frac{1/3250}{5.948} \cdot \frac{1}{m}$. Finally, by the Path Coupling Lemma (\cref{lemma:path-coupling}) we have 
\[
\Tmix(\varepsilon)=O(m\log(n/\varepsilon)).
\]
\end{proof}

\begin{remark}\label{rmk:worst_cases}
Note in particular that the cases that achieve the maximum increase in $\wH$ do in fact respect the degree constraints imposed by the structure of $\wG$, so that the bound on the mixing time is a tight bound. These cases are the $[(3,1)]$ and $[(3,3),(1,1)]$ configurations with $W(v^{\ast}) = 2$ and $W(v^{\ast}) = 1,$ respectively, as previously stated. The flip probabilities are chosen so that the expected increase in $\wH$ from both of these cases is equal.
\end{remark}

\section{Proofs of Lemmas}\label{section:proofs}

In this section we prove the required lemmas for the application of the path coupling theorem in the proof of \cref{main:hard} of \cref{thm:main-theorem}.

We turn our attention back to the coupling of \cref{section:flip-analysis}. Note that the increase in $\wH$ from move \labelcref{coupling:2} is bounded above by 
$$\sum_{i \colon a_i \neq 0}\bbone (i \neq m_a) \cdot (W(w_i) + 2(a_i-1))$$
and from move \labelcref{coupling:3} it is bounded above by
$$\sum_{i \colon b_i \neq 0}\bbone (i \neq m_b) \cdot (W(w_i) + 2(b_i-1)),$$
since $W(v) \leq 2$ for all $v \in \widehat{V}.$

Write $f(w_\ell)$ for the increase in $\wH$ from moves \labelcref{coupling:4}, \labelcref{coupling:5}, and \labelcref{coupling:6}. We get
$$f(w_\ell) \leq \max(q_{\ell}, q_{\ell}') \cdot W(w_{\ell}) + 2q_{\ell}(a_{\ell} - 1) + 2q_{\ell}'(b_{\ell}-1).$$

Additionally, write
$$W(A) := \sum_{i \colon a_i \neq 0}\left(W(w_i) + 2(a_i-1)\right)$$
and
$$W(B) := \sum_{j \colon b_j \neq 0}\left(W(w_j) + 2(b_j-1)\right).$$

With this setting, we provide proofs of the lemmas required for the proof of \cref{main:hard} of \cref{thm:main-theorem}.

\begin{proof}[Proof of \cref{lemma:dc0}]
    We have $S_{X_t}(v^{\ast}, c) = S_{Y_t}(v^{\ast}, c) = \{v^{\ast}\},$ and our coupling is the identity coupling on these states (i.e. our coupling sets $X_{t+1}(v^{\ast}) = Y_{t+1}(v^{\ast}) = c$). Thus $\wH$ decreases by at least $W(v^{\ast})$, hence $\EE[\alpha_c] \leq -\frac{W(v^{\ast})}{mk}$ (we pick the pair $(v^{\ast}, c)$ with probability $\frac{1}{mk}$) and thus $mk\EE[\alpha_c] \leq -W(v^{\ast}).$
\end{proof}

To prove \cref{lemma:dc-positive}, we will prove the following statements which imply the lemma:
\begin{prop}\label{prop:dc-positive-subcases}~
\begin{enumerate}
    \item If $d_c = 1$ then $mk\EE[\alpha_c] \leq (P_1 + P_2 - 2P_3)W_c.$ \label{lemma:dc1}
    \item If $W(v^{\ast}) = 2$ and $d_c \geq 2$ then $mk\EE[\alpha_c] \leq (P_1 + P_2 - 2P_3)W_c + 2(d_c - 1).$\label{lemma:weight2}
    \item If $W(v^{\ast}) = 1$ and $d_c \geq 2$ then $mk\EE[\alpha_c] \leq \left(\frac{3}{4} + 2P_3\right)W_c + (d_c-1).$\label{lemma:weight1}
\end{enumerate}
\end{prop}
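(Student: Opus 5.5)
The plan is to prove each of the three parts by a direct accounting of the coupling in \cref{section:flip-analysis} for a fixed color $c$, in the style of Vigoda~\cite{Vigoda00}. All probabilities below are taken before the normalization $\frac{1}{mk}$. For each part I will write $mk\,\EE[\alpha_c]$ as the sum of three kinds of contributions.
\begin{itemize}
\item A \emph{gain} from moves \labelcref{coupling:2} and \labelcref{coupling:3}. When the large cluster through $v^{\ast}$ is flipped together with its largest matched branch, the disagreement at $v^{\ast}$ disappears. In the case $d_c=1$ the two colorings in fact become identical.
\item An \emph{increase} from these same moves, coming from the unmatched branches. It is bounded by $\sum_{i\neq m_a}(W(w_i)+2(a_i-1))$, and symmetrically for the $b_i$.
\item The branch terms $f(w_\ell)$ from moves \labelcref{coupling:4}--\labelcref{coupling:6}. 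These are bounded by $\max(q_\ell,q_\ell')W(w_\ell)+2q_\ell(a_\ell-1)+2q_\ell'(b_\ell-1)$.
\end{itemize}
Throughout I will use only the inequalities of \cref{prop:weight-properties}, together with $P_i=0$ for $i\ge 7$, which reduces the analysis to finitely many cluster sizes.

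\textbf{Part \labelcref{lemma:dc1} ($d_c=1$).} Let $w$ be the unique neighbor of color $c$. The gain $-W(v^{\ast})(P_A+P_B)$ is nonpositive, and I will simply drop it. The remaining contribution is the single branch term
\[
\max(P_{A-1}-P_A,\,P_{B-1}-P_B)\,W(w)+2(A-2)(P_{A-1}-P_A)+2(B-2)(P_{B-1}-P_B).
\]
I will split into cases according to which of $A-1$ and $B-1$ attains the maximum.
\begin{itemize}
\item If the maximizing index is $1$, so that $A=2$ say, then \cref{property:constdif} bounds the cross term $2(B-2)(P_{B-1}-P_B)$ by $2(P_2-P_3)\le 2W(w)(P_2-P_3)$. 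This gives the total $W(w)(P_1-P_2+2P_2-2P_3)=W(w)(P_1+P_2-2P_3)$.
\item If both indices are at least $2$, then \cref{property:maxat1} with $\ell=1$ bounds the dominant term by $W(w)(P_1-P_2)$. \Cref{property:constdif} bounds the other term by $2(P_2-P_3)$.
\end{itemize}
Since $W_c=W(w)$, both cases give the bound $(P_1+P_2-2P_3)W_c$.

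\textbf{Parts \labelcref{lemma:weight2} and \labelcref{lemma:weight1} ($d_c\ge 2$).} Here I keep the gain term and expand the increase. The increase from moves \labelcref{coupling:2} and \labelcref{coupling:3} is at most $P_A\,(W(A)-W(w_{m_a})-2(a_{m_a}-1))$ plus the symmetric $B$-term. By \cref{property:prod_decreasing}, $P_A\le P_{a_i}$ whenever $a_i\le A-1$. This lets me charge each unmatched branch $i$ an amount at most $P_A$ times its weight, and I will combine that charge with $f(w_i)$.

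The goal is a per-neighbor budget: each $w_i$ is charged at most $A(v^{\ast})W(w_i)+W(v^{\ast})$. Summing this budget over the $d_c$ neighbors, and using one copy of $W(v^{\ast})$ to absorb the negative gain $-W(v^{\ast})(P_A+P_B)$, should leave exactly the slack $W(v^{\ast})(d_c-1)$.
\begin{itemize}
\item For $W(v^{\ast})=2$, I will use \cref{property:maxat3,property:maxat4} to bound terms of the form $(i-1)P_i$ and $(2i-5)P_i$ arising from large clusters. \Cref{property:ineq2} then compares $8P_3$ with $P_1+P_2-2P_3$.
\item For $W(v^{\ast})=1$, the edge $v^{\ast}$ lies in only one of $G_1,G_2$. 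So its neighbors come from one line graph, which restricts how heavy the clusters can be relative to $W_c$; I would also try to exploit that its two endpoints each contribute at most one neighbor of color $c$ in a proper setting. The target constant $\frac34+2P_3$ should come out of the extremal $[(3,3),(1,1)]$-type configuration, with \cref{property:ineq1,property:ineq3} used to reconcile the competing cases.
\end{itemize}

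\textbf{Main obstacle.} I expect the hardest step to be the $d_c\ge 2$ case analysis, in particular Part \labelcref{lemma:weight1}. It requires a finite but nontrivial optimization over branch sizes $(a_i,b_i)$ and weights $W(w_i)\in\{1,2\}$, and there are several near-tight configurations. I would first reduce to $d_c=2$ by showing that each extra neighbor adds at most $A(v^{\ast})W(w_i)+W(v^{\ast})$; here monotonicity (\cref{property:decreasing,property:prod_decreasing}) ensures that adding a branch never increases a matched flip probability. I would then enumerate the remaining cases using the listed inequalities.
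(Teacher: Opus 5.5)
\textbf{There is a genuine gap in Parts~\labelcref{lemma:weight2} and~\labelcref{lemma:weight1}.}

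Your Part~\labelcref{lemma:dc1} is fine and is the paper's argument. You discard moves \labelcref{coupling:2}--\labelcref{coupling:3} and bound the single branch term with \cref{property:constdif,property:maxat1} of \cref{prop:weight-properties}. However, the ``gain'' you attribute to those moves does not exist. In move~\labelcref{coupling:2}, flipping $S_{X_t}(v^{\ast},c)$ sets $X_{t+1}(v^{\ast})=c$. The matched cluster $S_{Y_t}(w_{m_a},X_t(v^{\ast}))$ does not contain $v^{\ast}$, so $Y_{t+1}(v^{\ast})=Y_t(v^{\ast})\neq c$. Move~\labelcref{coupling:3} is symmetric. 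So the disagreement at $v^{\ast}$ survives. For $d_c=1$ these moves change $\wH$ by exactly $0$, and the colorings do not become identical. For $d_c\ge 2$ they contribute only the nonnegative cost of the unmatched branches. The only decrease at $v^{\ast}$ comes from colors with $d_c=0$ (\cref{lemma:dc0}). The slack $W(v^{\ast})(d_c-1)$ in the statement is the credit for the $d_c-1$ extra such colors that a repeated color frees up, cf.\ \eqref{eqn:avail}.

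This breaks your plan for Parts~\labelcref{lemma:weight2} and~\labelcref{lemma:weight1}. Charging each neighbor $A(v^{\ast})W(w_i)+W(v^{\ast})$ and summing gives $A(v^{\ast})W_c+d_cW(v^{\ast})$, a full $W(v^{\ast})$ above the target. Nothing cancels this excess. Even a genuine gain of $W(v^{\ast})(P_A+P_B)$ could not, since generically $A,B\ge 3$ and so $P_A+P_B\le 2P_3<1$. Fed into the proof of \cref{main:hard}, your bound would cost an extra $W(v^{\ast})$ for every color appearing in $N(v^{\ast})$, destroying the $5.948\Delta$ threshold.

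The missing idea is that each branch costs well under your budget. The paper bounds the branch term by $g(w_\ell)=\max(P_{a_\ell},P_{b_\ell})W(w_\ell)+2(a_\ell-1)P_{a_\ell}+2(b_\ell-1)P_{b_\ell}\le W(w_\ell)+4P_3$, using \cref{property:maxat3}. It controls the cross terms from moves~\labelcref{coupling:2}/\labelcref{coupling:3} by $O(P_3)$, using \cref{property:maxat3,property:maxat4}. It then compares $W_c+4d_cP_3+O(P_3)$ directly with $A(v^{\ast})W_c+W(v^{\ast})(d_c-1)$. This closes via \cref{property:ineq2} when $W(v^{\ast})=2$, and via \cref{property:ineq3} when $W(v^{\ast})=1$, $d_c\ge 3$.

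For $W(v^{\ast})=1$, $d_c=2$ even this crude bound fails. The margin between $(\frac34+2P_3)W_c+1$ and $W_c+8P_3$ is $\frac12-4P_3$, $\frac14-2P_3$, or $0$ for $W_c=2,3,4$. That leaves nothing for the $P_A,P_B$ terms. An explicit optimization over $a_1,a_2,b_1,b_2$ and $W(w_1),W(w_2)$ is needed, and you only name the expected extremal configuration.

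Your reduction to $d_c=2$ also has the monotonicity backwards. Adding a branch lowers $P_A$, which enlarges the unmatched residual $q_{m_a}=P_{a_{m_a}}-P_A$ of the old largest branch. So a new neighbor's marginal cost is not just its own term. Finally, you cannot use properness to cap $d_c$, since the contraction must hold on all of $\widehat{\Omega}$.
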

Note that \cref{lemma:dc1} implies the result in the case $W(v^{\ast}) = 1$ by \cref{property:ineq1}.

\begin{proof}[Proof of \cref{lemma:dc1}]
    Note that $A = a_1 + 1$ and $B = b_1 + 1.$ Without loss of generality suppose $P_{a_1} - P_{a_1+1} \geq P_{b_1} - P_{b_1+1}.$ Then 

    \begin{multline*}
    mk\EE[\alpha_c] = \max(P_{a_1}-P_{a_1+1}, P_{b_1}-P_{b_1+1})W(w_1) + 2(P_{a_1}-P_{a_1+1})(a_1-1)\\ + 2(P_{b_1}-P_{b_1+1})(b_1-1)\\ = (P_{a_1}-P_{a_1+1})(W(w_1) + 2(a_1-1)) + 2(P_{b_1}-P_{b_1+1})(b_1-1).\\
    \end{multline*}
    By \cref{property:constdif,property:maxat1,property:decreasing} of \cref{prop:weight-properties},
    $$mk\EE[\alpha_c] \leq \left(P_1 - P_2 + 2(P_2 - P_3)\right)W_c = \left(P_1 + P_2 - 2P_3\right)W_c.$$
\end{proof}

\begin{proof}[Proof of \cref{lemma:weight2}]
    Write 
    \begin{align*}
    g(w_{\ell}) &= \max(P_{a_\ell}, P_{b_{\ell}}) \cdot W(w_{\ell}) + 2(a_{\ell} - 1)P_{a_{\ell}} + 2(b_{\ell} - 1)P_{b_{\ell}}.\\
    \end{align*}

    If $m_{a} = m_{b} = \ell$ then 
    \begin{align*}
    f(w_{\ell}) &\leq \max(P_{a_\ell}, P_{b_{\ell}}) \cdot W(w_{\ell}) + 2(a_{\ell} - 1)(P_{a_{\ell}} - P_A) + 2(b_{\ell} - 1)(P_{b_{\ell}} - P_B)\\
    &= \max(P_{a_\ell}, P_{b_{\ell}}) \cdot W(w_{\ell}) + 2(a_{\ell} - 1)P_{a_{\ell}} + 2(b_{\ell} - 1)P_{b_{\ell}} - 2P_A(a_{m_a} - 1) - 2P_B(b_{m_b} - 1)\\
    &= g(w_{\ell}) - 2P_A(a_{m_a} - 1) - 2P_B(b_{m_b} - 1).\\
    \end{align*}

    If $m_a \neq m_b$ we have by a very similar argument that 
    \begin{align*}
        f(w_{m_a}) + f(w_{m_b}) \leq g(w_{m_a}) + g(w_{m_b}) - 2P_A(a_{m_a} - 1) - 2P_B(b_{m_b} - 1).\\
    \end{align*}
    Thus we conclude
    $$\sum_{\ell} f(w_{\ell}) \leq \sum_{\ell} g(w_{\ell}) - 2P_A(a_{m_a} - 1) - 2P_{B}(b_{m_b} - 1),$$
    so that 
    \begin{multline*}
    mk\EE[\alpha_c] \leq P_A(W(A) - W(w_{m_a}) - 4(a_{m_a} - 1))\\ + P_B(W(B) - W(w_{m_b}) - 4(b_{m_b} - 1)) + \sum_{\ell} g(w_{\ell}).
    \end{multline*}
    We find that $g(w_{\ell})$ is maximized at $a_{\ell} = 1, b_{\ell} = 3,$ and is equal to $4P_3$ for this setting. Furthermore this maximizes the term $P_A(W(A) - W(w_{m_a}) - 4(a_{m_a} - 1)).$ By checking the finitely many cases $a_\ell, b_\ell \leq 6$ we find that the marginal decrease in $\sum_{\ell}g(w_{\ell})$ required to make the term $P_B(W(B) - W(w_{m_b}) - 4(b_{m_b} - 1))$ nonzero is greater than the contribution of this term to the sum, hence
    \begin{equation}\label{eq:d_cgeq2bound}
    mk\EE[\alpha_c] - W(v^{\ast})(d_c-1) \leq 2P_A(W(A) - W(w_{i_{\max}})) + \sum_{\ell}W(w_{\ell}) + 4d_cP_3 - W(v^{\ast})(d_c-1).
    \end{equation}
    We compute
    \begin{align*}
        mk\EE[\alpha_c] - 2(d_c - 1)&\leq \left(\frac{2P_A(W(A) - W(w_{m_a})) + 4d_cP_3 - 2(d_c-1) + W_c}{W_c}\right)W_c\\
         &\leq \left(\frac{2P_A(W(A) - W(w_{m_a})) + 4d_cP_3 - 2(d_c-1)}{W_c} + 1\right)W_c\\
         &\leq \left(\frac{2P_A(W(A) - W(w_{m_a})) + 4d_cP_3 + 2}{d_c} - 1\right)W_c\\
         &\leq \left(\frac{2P_A(W(A) - W(w_{m_a}))}{d_c} + \frac{4d_cP_3 + 2}{d_c} - 1\right)W_c\\
         &\leq \left(\frac{2P_A(W(A) - W(w_{m_a}))+2}{d_c} + 4P_3-1\right)W_c\\
         &\leq \left(\frac{2P_A(2(A - 1))+2}{d_c} + 4P_3-1\right)W_c\\
         &\leq \left(\frac{8P_3+2}{d_c} + 4P_3-1\right)W_c\\
         &\leq 8P_3W_c\\
         &\leq (P_1 + P_2 - 2P_3)W_c & \mbox{by \cref{property:ineq2}}.\\
    \end{align*}
\end{proof}

\begin{proof}[Proof of \cref{lemma:weight1}]
    First suppose that $d_c = 2.$ Without loss of generality suppose that $(P_{a_1} - P_A) \leq (P_{b_1} - P_B)$ and $m_a = 1.$ Note that since $A \geq a_1 + 2$ and $B \geq b_1 + 2$ the statement $(P_{a_1} - P_A) \leq (P_{b_1} - P_B)$ implies that $a_1 \geq b_1$ due to \cref{property:decreasing} of \cref{prop:weight-properties}.

    We compute 
    $$f(w_1) = \begin{cases}(P_{b_1} - P_B)W(w_ 1) + 2(a_1 - 1)(P_{a_1} - P_A) + 2(b_1 - 1)(P_{b_1} - P_B) & \text{ if } m_b = 1\\
    P_{b_1}W(w_1) + 2(a_1 - 1)(P_{a_1} - P_A) + 2(b_1 - 1)P_{b_1} & \text{ otherwise}\end{cases}$$
    and
    $$f(w_2) = \begin{cases}\max(P_{a_2}, P_{b_2} - P_B)W(w_2) + 2(a_2 - 1)P_{a_2} + 2(b_2 - 1)(P_{b_2} - P_B) & \text{ if } m_b = 2\\
    \max(P_{a_2}, P_{b_2})W(w_2) + 2(a_2 - 1)P_{a_2} + 2(b_2 - 1)P_{b_2} & \text{ otherwise}\end{cases}.$$
    We now have two cases depending on the weights $W(w_1)$ and $W(w_2)$.
    The first is when $W(w_1) \geq W(w_2).$ Note that $b_1$ and $b_2$ contribute the same expression to $f(w_1)$ and $f(w_2)$ respectively, except for the maximum term, which is always maximized when $b_1 \geq b_2,$ so it suffices to assume that $b_1 \geq b_2.$ Then our equations become
    $$f(w_1) = (P_{b_1} - P_B)W(w_1) + 2(a_1 - 1)(P_{a_1} - P_A) + 2(b_1 - 1)(P_{b_1} - P_B)$$
    and
    $$f(w_2) = \max(P_{a_2}, P_{b_2})W(w_2) + 2(a_2 - 1)P_{a_2} + 2(b_2 - 1)P_{b_2}.$$

    We thus get
    \begin{multline*}
    mk\EE[\alpha_c] = (2(a_2 - a_1) + W(w_2))P_A + (2(b_2 - b_1) + W(w_2) - W(w_1))P_B \\ + 2(a_1 - 1)P_{a_1} + 2(b_1 - 1)P_{b_1} + 2(a_2 - 1)P_{a_2} + 2(b_2 - 1)P_{b_2} + P_{b_1}W(w_1) + \max(P_{a_2}, P_{b_2})W(w_2).\\ 
    \end{multline*}

    This expression is maximized at $a_1 = a_2$ and $b_1 = b_2,$ since we know $a_1 \geq a_2$ and $b_1 \geq b_2.$ It follows that 
    \begin{multline*}
    mk\EE[\alpha_c]-1 \leq W(w_2)P_{2a_1+1} + (W(w_2) - W(w_1))P_{2b_1 + 1}\\ + 4(a_1 - 1)P_{a_1} + 4(b_1 - 1)P_{b_1} + (W(w_2)+W(w_1))P_{b_1}-1.
    \end{multline*}
    By \cref{property:maxat1} we have $b_1 = 1,$ and the terms involving $a_1$ are maximized at $a_1 = 3.$ In this case we get
    $$mk\EE[\alpha_c]-1 \leq (W(w_1) + W(w_2))P_1+ (W(w_2) - W(w_1))P_3 + 8P_3 - 1.$$
    It follows that 
    $$mk\EE[\alpha_c]-1 = \left(\frac{mk\EE[\alpha_c]-1}{W_c}\right)W_c \leq \left(\frac{8P_3 + 3}{4}\right)W_c = \left(\frac{3}{4} + 2P_3\right)W_c.$$

    Now we handle the other case. We know that $W(w_1) = 1$ and $W(w_2) = 2.$ If $b_1 \geq b_2$ the proof and bound is identical to the first case, so we can assume that $b_2 \geq b_1.$

    We have
    $$f(w_1) = P_{b_1}W(w_1) + 2(a_1 - 1)(P_{a_1} - P_A) + 2(b_1 - 1)P_{b_1}$$
    and
    $$f(w_2) = (P_{b_2} - P_B)W(w_2) + 2(a_2 - 1)P_{a_2} + 2(b_2 - 1)(P_{b_2} - P_{B}),$$
    so that
    \begin{multline*}
    mk\EE[\alpha_c] - 1 = (2(a_2 - a_1) + W(w_2))P_A + (2(b_1 - b_2) + W(w_2) - W(w_1))P_B \\ + 2(a_1 - 1)P_{a_1} + 2(b_1 - 1)P_{b_1} + 2(a_2 - 1)P_{a_2} + 2(b_2 - 1)P_{b_2} + W(w_1)P_{b_1} + W(w_2)P_{b_2} - 1.\\ 
    \end{multline*}
    It follows by the same analysis as the other case that $a_1 = a_2,$ but we cannot conclude that $b_1 = b_2,$ since we have the additional terms $W(w_1)P_{b_1}$ and $W(w_2)P_{b_2}.$ But then
    \begin{align*}
    mk\EE[\alpha_c]-1 &\leq W(w_2)P_{2a_1 + 1} + 4(a_1 - 1)P_{a_1} + 2(b_1 - 1)P_{b_1} + 2(b_2 - 1)P_{b_2} 
    \\ & \ \ \ \ \ \ \ + W(w_1)P_{b_1} + W(w_2)P_{b_2} - 1\\
    &\leq 8P_{3} + 3 - 1   
    & \mbox{by \cref{property:maxat1,property:maxat3}} \\
    &= 8P_{3} + 2.\\
    \end{align*} 
    We thus find
    $$mk\EE[\alpha_c]-1 = \left(\frac{mk\EE[\alpha_c]-1}{W_c}\right)W_c \leq \left(\frac{8P_3+2}{3}\right)W_c < \left(\frac{3}{4} + 2P_3\right)W_c.$$
    This proves the result for $d_c = 2.$

    We now consider the case $d_c \geq 3.$ Note that the bound given by \cref{eq:d_cgeq2bound} applies in out setting as well (since the proof of this bound does not require any bound on $d_c$ or $W(v^{\ast})$. We thus get 
    \begin{align*}
         mk\EE[\alpha_c] - (d_c-1) &\leq \left(\frac{2P_A(W(A) - W(w_{m_a})) + 4d_cP_3 - (d_c-1) + W_c}{W_c}\right)W_c\\
         &\leq \left(\frac{2P_A(W(A) - W(w_{m_a})) + 1}{W_c} + \frac{4d_cP_3 - d_c}{d_c} + 1\right)W_c\\
         &\leq \left(\frac{2P_A(W(A) - W(w_{m_a})) + 1}{W_c} + 4P_3\right)W_c\\
    \end{align*}
    At this point we have two cases to consider. The first is $d_c = 3$ and $W(w_1) = W(w_2) = W(w_3) = 1,$ so that $W_c = 3.$ We get 
    \begin{align*}
         mk\EE[\alpha_c] - (d_c-1) &\leq \left(\frac{2P_A(W(A)-1) + 1}{3} + 4P_3\right)W_c\\
         &\leq \left(\frac{2P_A(4 + 2(A-4)-1) + 1}{3} + 4P_3\right)W_c\\
         &\leq \left(\frac{2P_A(2A - 5) + 1}{3} + 4P_3\right)W_c\\
         &\leq \left(2P_4 + \frac{1}{3} + 4P_3\right)W_c &\mbox{by \cref{property:maxat4}}\\
         &< \left(\frac{1}{4} + 6P_3\right)W_c\\
         &\leq \left(\frac{3}{4} + 2P_3\right)W_c &\mbox{by \cref{property:ineq3}}\\
    \end{align*}
    Otherwise $W_c \geq 4,$ so that
    \begin{align*}
         mk\EE[\alpha_c] - (d_c-1) &\leq \left(\frac{2P_A(2(A-1)) + 1}{W_c} + 4P_3\right)W_c\\
         &\leq \left(\frac{4P_A(A-1) + 1}{W_c} + 4P_3\right)W_c\\
         &\leq \left(\frac{8P_3 + 1}{4} + 4P_3\right)W_c &\mbox{by \cref{property:maxat3}}\\
         &\leq \left(\frac{1}{4} + 6P_3\right)W_c\\
         &\leq \left(\frac{3}{4} + 2P_3\right)W_c &\mbox{by \cref{property:ineq3}}\\
    \end{align*}
    This completes the proof.
\end{proof}

\section{List Colorings}\label{section:list-colorings}

\subsection{Main Results}
In this section we expand the analysis of the flip dynamics on vertex colorings on $\wG$ to list colorings. This allows us to prove analogous results in the list coloring setting:
\begin{theorem}\label{thm:main-theorem-list} For all $n$, all $\Delta$, all pairs of $n$-vertex graphs $G_1=(V,E_1), G_2=(V,E_2)$ with maximum degree $\Delta$ and $k \geq 5.948\Delta$,  the flip dynamics for simultaneous edge list colorings has mixing time $O(m\log n)$ for a 6-local setting of the flip probabilities.
\end{theorem}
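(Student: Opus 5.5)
We will reduce \cref{thm:main-theorem-list} to the analysis already carried out for \cref{main:hard} of \cref{thm:main-theorem}. Each vertex $v\in\widehat V$ (edge of $G_1\cup G_2$) comes with a list $L(v)$, and we assume the standard list-coloring hypothesis $|L(v)|\ge k$ for all $v$, with $k\ge 5.948\Delta$. The state space is $\widehat\Omega=\prod_v L(v)$, extended as in \cref{sub:expanded-state-space}.

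\textbf{The chain.} The flip dynamics chooses $v_t$ uniformly and $c_t$ uniformly from $L(v_t)$. It flips the $(X_t(v_t),c_t)$-cluster with probability $P_s/s$ only if every vertex of the cluster has both colors in its list; otherwise the move is rejected. Equivalently, we view a cluster with an infeasible vertex as unflippable. We keep the same flip probabilities \eqref{flip:setting} and the same weighted metric $\wH$.

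\textbf{The coupling.} We use the coupling of \cref{section:flip-analysis} color by color. For colors $c\in L(v^{\ast})$, the matchings in moves \labelcref{coupling:2}--\labelcref{coupling:6} are defined exactly as before. The difference is that a cluster that cannot be flipped because of list constraints is assigned flip probability $0$, and the greedy matching is taken among the surviving clusters. We normalize by $1/(m\cdot|L(\cdot)|)$; this is at most $1/(mk)$ per pair, and moves with $v_t$ fixed can be padded to a common normalization.

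\textbf{Lemma analogues.} We then re-prove \cref{lemma:dc0} and \cref{lemma:dc-positive} in the list setting.
\begin{enumerate}
\item For $d_c=0$ with $c\in L(v^{\ast})$, the good move recolors $v^{\ast}$ in both chains, as before.
\item The number of such colors is at least $|L(v^{\ast})|-(D-R)\ge k-(D-R)$, so \eqref{eqn:avail} still holds with $k$ as a lower bound.
\item For $d_c\ge1$, deleting (zeroing) some of the clusters $S_{X_t}(w_i,Y_t(v^{\ast}))$, $S_{Y_t}(w_i,X_t(v^{\ast}))$, or the big clusters, only removes terms from the upper bounds on $f(w_\ell)$ and on the increases from moves \labelcref{coupling:2} and \labelcref{coupling:3}.
\end{enumerate}

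\textbf{Main obstacle.} The delicate point is the last item. When a big cluster $S_{X_t}(v^{\ast},c)$ is infeasible in one chain but its partner is feasible, the greedy matching leaves the smaller clusters unmatched with probability $P_{a_i}$ rather than $P_{a_i}-P_A$. Vertices of $\wG$ other than $v^{\ast}$ have the same lists in both chains, so feasibility differs between the chains only through $v^{\ast}$. Since $c\in L(v^{\ast})$ and both $X_t(v^{\ast}),Y_t(v^{\ast})\in L(v^{\ast})$, the big clusters through $v^{\ast}$ are feasible at $v^{\ast}$ in both chains. Their feasibility elsewhere coincides with that of the corresponding clusters $S_{Y_t}(w,X_t(v^{\ast}))$ and $S_{X_t}(w,Y_t(v^{\ast}))$, because these are the same vertex sets minus $v^{\ast}$. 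Hence matched pairs are simultaneously feasible or infeasible, and infeasibility just sets some $a_i$ or $b_i$ to $0$. The per-color bounds of \cref{prop:dc-positive-subcases} are monotone in these parameters, so they continue to hold.

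\textbf{Colors outside $L(v^{\ast})$.} These colors only arise in moves at neighbors $w$, as clusters $S(w,c')$ with $c'\in\{X_t(v^{\ast}),Y_t(v^{\ast})\}$. They are already accounted for under the color $c=X_t(w)$.

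\textbf{Conclusion.} Summing the per-color bounds as in the proof of \cref{main:hard} gives contraction $1-\Omega(1/m)$. \cref{lemma:path-coupling} then yields $O(m\log n)$ mixing.
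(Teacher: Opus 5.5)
There is a genuine gap in how you handle colors $c\notin L(v^{\ast})$ that occur on $N(v^{\ast})$.

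\textbf{Why your argument does not cover these colors.} For such $c$, the clusters $S_{X_t}(v^{\ast},c)$ and $S_{Y_t}(v^{\ast},c)$ contain $v^{\ast}$, so they are unflippable in both chains. The neighbor clusters $S_{X_t}(w_i,Y_t(v^{\ast}))$ and $S_{Y_t}(w_i,X_t(v^{\ast}))$ avoid $v^{\ast}$ and are typically flippable. This is exactly the scenario you call the main obstacle, and your resolution of it explicitly assumes $c\in L(v^{\ast})$. Saying these clusters are ``already accounted for under the color $c=X_t(w)$'' bounds nothing. Your third item does not cover them either: removing a big cluster does not just delete terms. It raises $q_{m_a}$ (resp.\ $q'_{m_b}$) to the full $P_{a_{m_a}}$ (resp.\ $P_{b_{m_b}}$).

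\textbf{The bound actually fails.} \cref{lemma:dc-positive} is false for such colors. Take $d_c=1$ with $S_{X_t}(w,Y_t(v^{\ast}))=S_{Y_t}(w,X_t(v^{\ast}))=\{w\}$. The two singleton flips are matched with probability $1$ and create a disagreement at $w$. So $mk\,\EE[\alpha_c]=W_c$, while the lemma requires at most $A(v^{\ast})W_c<W_c$.

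\textbf{The missing idea.} This is how the paper treats the case: such a color blocks no color of $L(v^{\ast})$. Hence the number of colors of $L(v^{\ast})$ absent from $N(v^{\ast})$ is at least $k-(D-R)+|\{c\notin L(v^{\ast})\colon d_c\ge 1\}|$. Your second item keeps only $k-(D-R)$. Each such $c$ therefore earns an extra $-W(v^{\ast})$, and for it you only need $mk\,\EE[\alpha_c]\le A(v^{\ast})W_c+W(v^{\ast})d_c$. The paper obtains this by treating the configuration as a $d_c=2$ case with a phantom neighbor. The bound does hold: each neighbor's greedily matched contribution is at most $W(w_i)+4P_3$, and this is at most $A(v^{\ast})W(w_i)+W(v^{\ast})$.

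Two smaller points.

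\textbf{Feasibility for $d_c\ge 2$.} Your claim that matched pairs are simultaneously feasible is false here. $S_{X_t}(v^{\ast},c)\setminus\{v^{\ast}\}$ is the union of all the $S_{Y_t}(w_i,X_t(v^{\ast}))$, but the big cluster is matched only with the one at $w_{m_a}$. An infeasible cluster at some $w_j$ with $j\ne m_a$ therefore kills the big cluster while its partner survives with full probability $P_{a_{m_a}}$. This is not equivalent to setting $a_j=0$, and the expected increase can exceed that of the corresponding non-list configuration. The conclusion is still recoverable. The bound \eqref{eq:d_cgeq2bound} is derived from the unmatched quantity $\sum_\ell g(w_\ell)$ plus a nonnegative term proportional to $P_A$, and this can only drop when $P_A$ is replaced by $0$. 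You should make the argument at that level rather than via monotonicity of the $f(w_\ell)$ bounds.

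\textbf{Normalization with unequal list sizes.} If $c_t$ is drawn uniformly from $L(v_t)$ and lists have different sizes, a cluster $S$ has total rate $\frac{P_s}{s}\sum_{v\in S}\frac{1}{m|L(v)|}$. This can make a big cluster through a short-listed $v^{\ast}$ heavier than its partner, which breaks the greedy matching. Padding to $K=\max_v|L(v)|$ repairs this, but only yields contraction $1-\Omega(k/(mK))$, not $1-\Omega(1/m)$. The paper avoids the issue by drawing $i_t\in[k]$.
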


\begin{corollary}
    For all pairs of $n$-vertex graphs $G_1 = (V,E_1), G_2 = (V, E_2)$ with maximum degree $\Delta$ and $k \geq 5.948\Delta,$ there exists an FPTAS for the number of simultaneous edge colorings on $(G_1,G_2).$  
\end{corollary}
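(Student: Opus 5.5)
The plan is to derive the corollary from \cref{thm:main-theorem-list} (together with its proof) via the deterministic counting framework of Chen, Feng, Guo, Zhang, and Zou~\cite{CFGZZ25}. The input that framework needs is a contractive coupling for list colorings, not just rapid mixing. I would first make precise the contraction the proof of \cref{thm:main-theorem-list} yields. For every list assignment with $|L(e)|\ge k\ge 5.948\Delta$ and every pair $X_t\sim Y_t$ differing at a single vertex $v^\ast$ of $\wG$, the coupling of \cref{section:flip-analysis}, run with the flip probabilities \eqref{flip:setting}, satisfies
\[
\EE[\wH(X_{t+1},Y_{t+1})]\le \Bigl(1-\tfrac{c}{m}\Bigr)\wH(X_t,Y_t)
\]
for a constant $c>0$. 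The coupling uses only $6$-local moves, so each step changes a bounded number of vertices.

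Next I would check the hypotheses of \cite{CFGZZ25}. The contraction must hold uniformly over the whole class of instances obtained by pinning. Pinning the colors of some edges of $G_1\cup G_2$ and deleting them yields a list-coloring instance of the same form: the remaining lists lose at most one color per pinned neighbor in $\wG$, and the remaining graphs still have maximum degree at most $\Delta$. This is exactly where the list-coloring version is needed. To keep the hypothesis $|L(e)|\ge 5.948\Delta$, I would use the standard slack form of the condition, namely $|L(e)|-\deg_{\wG}(e)$ measured against the weighted neighborhood. That requires checking that the per-color bounds (\cref{lemma:dc0}, \cref{lemma:dc-positive}) only ever used the number of available colors $k-(D-R)$ and the weight bound of \cref{prop:nbhd_weight}, both of which degrade consistently under pinning. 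The metric $\wH$ takes values in $\{0,\dots,2m\}$, a polynomial range, and a coupling step is computable in time polynomial in $n$ because clusters have size at most $6$. Given these, the theorem of \cite{CFGZZ25} produces a deterministic algorithm that approximates each marginal ratio $\Pr[\chi(e)=c]$ to within $1\pm\eps/m$ in time $\mathrm{poly}(n,1/\eps)$.

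Finally I would use the standard telescoping self-reduction. Fix a proper coloring $\sigma$, which exists and is found greedily since $k>4\Delta$. Write $|\Omega|^{-1}$ as a product over edges $e_1,\dots,e_m$ of the conditional probabilities $\Pr[\chi(e_i)=\sigma(e_i)\mid \chi(e_j)=\sigma(e_j),\,j<i]$. Each factor is a marginal in a pinned list instance, and each is bounded below by $1/k$ because the list is larger than the degree. Multiplying the $m$ estimates gives a $(1\pm\eps)$-approximation of $|\Omega|$, which is the FPTAS.

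I expect the main obstacle to be the pinning-closure step: making sure the analysis of \cref{section:proofs} really holds under reduced, non-uniform list sizes, where $k$ in \cref{lemma:dc0} is replaced by $|L(v^\ast)|$. If the direct slack argument fails, I would fall back on showing that the coupling analysis only uses $|L(v^\ast)|-D\ge 5.948\Delta-D$. This inequality is preserved because pinning an edge removes one neighbor and at most one color.
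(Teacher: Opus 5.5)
Your proposal is correct and is essentially the paper's argument: the paper feeds the contractive coupling of \cref{thm:main-theorem-list} into \cite[Lemma 19]{CFGZZ25} to get coupling independence, and then invokes \cite[Theorem 4]{CFGZZ25} for the FPTAS. Your pinning-closure check and telescoping self-reduction just unpack that citation, and usefully so, since the list theorem as stated with $|L(v)|\ge k$ is not literally closed under pinning.

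Two small slips. First, the invariant to track is $|L(v^\ast)|-\deg'(v^\ast)\ge k-\deg_{\wG}(v^\ast)$, with $\deg'$ the degree after pinning; your fallback $|L(v^\ast)|-D\ge 5.948\Delta-D$ collapses to $|L(v^\ast)|\ge 5.948\Delta$, which pinning destroys. Second, the telescoping factors need not be $\ge 1/k$, though no such lower bound is needed once the marginal estimates are multiplicative.
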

\begin{proof}
    Since \cref{thm:main-theorem-list} yields a contractive coupling for list colorings of $\wG$, we can apply \cite[Lemma 19]{CFGZZ25} to obtain coupling independence, which allows us to apply \cite[Theorem 4]{CFGZZ25} to conclude the existence of an FPTAS for vertex $k$-colorings on $\wG$. As discussed in \cref{sub:setup} this is equivalent to giving an FPTAS for counting simultaneous edge colorings.
\end{proof}
\subsection{Flip Dynamics for List Colorings}

We will expand the flip dynamics chain for vertex colorings on $\wG$ to list colorings on $\wG.$ We first describe the setup.

\begin{defn}
    A \emph{list assignment} on $\wG$ is a mapping $L \colon \widehat{V} \to 2^{\NN}.$ An $L$-coloring is an assignment $\sigma \colon \widehat{V} \to \NN$ such that $\sigma(v) \in L(v)$ for all $v \in \widehat{V}.$ We say that an $L$-coloring is proper if for all adjacent vertices $v_1, v_2 \in \widehat{V},$ $\sigma(v_1) \neq \sigma(v_2).$
\end{defn}

As in \cref{sub:expanded-state-space} we will consider $\Omega_L$ to be the set of proper $L$-colorings of $\wG$ and $\widehat{\Omega}_L$ the set of not necessarily proper $L$-colorings. We will define our Markov chain on $\widehat{\Omega}_L$ and by the same analysis as for the vertex colorings chains we can derive a mixing time bound for the chain on proper list colorings.

We now describe the flip dynamics chain on $\widehat{\Omega}_L.$ We define the clusters as in \cref{defn:clusters}, except in this case we can only flip the subset of clusters $S_{\sigma}(v, c)$ for which all $w \in S_{\sigma}(v,c)$ satisfy $\{\sigma(v), c\} \subseteq L(w).$ This allows us to swap all adjacent colors on the cluster. If this condition is satisfied we say that a cluster is \emph{flippable}. We let $\mcS^L_{\sigma}$ be the set of all flippable clusters in $\sigma.$ 

\begin{defn}[Flip Dynamics for List Colorings]
    Let $\{P_i\}_{i \in \NN}$ be a set of probabilities such that $P_1 = 1$ and $0 \leq P_i \leq 1$ for all $i \geq 1.$ We say that a Markov chain $(X_t)$ on the state space $\widehat{\Omega}_L$ is a setting of the Flip Dynamics chain for vertex $k$-list colorings on $\wG$ if $|L(v)| \geq k$ for all $v \in \widehat{V}$ and the transitions $X_t \to X_{t+1}$ are constructed via the following process:
    \begin{enumerate}
        \item Choose a pair $(v_t, i_t) \in \widehat{V} \times [k]$ uniformly at random. Let $c_t$ be the $i_t$-th smallest element of $L(v_t).$ 
\item        Let \(s = |S_{X_t}(v_t,c_t)|\). With probability \(P_s/s\),
interchange colors \(X_t(v_t)\) and \(c_t\) on the cluster
\(S_{X_t}(v_t,c_t)\), and let \(X_{t+1}\) be the resulting coloring.
Otherwise let \(X_{t+1}=X_t\).
    \end{enumerate}
\end{defn}
Again note that any transition in the Glauber dynamics chain on list colorings (which we have not defined here but is analogous to the Glauber dynamics on vertex colorings) occurs with nonzero probability in the flip dynamics chain, so this chain is ergodic for $k \geq 4\Delta-2$. In the setting of \cref{thm:main-theorem-list} we use the same flip probabilities as in \cref{thm:main-theorem}.

\subsection{Analysis of the Flip Dynamics for List Colorings}

Our coupling for the list colorings case is identical, except that we may flip some components with probability $0$ rather than $P_s,$ where $s$ is the size of the component. As a result, we will not state the coupling in full detail here, but we will point out the changes in the proof of rapid mixing.  

Rapid mixing of the chain follows almost directly from the proof of \cref{main:hard} of \cref{thm:main-theorem}.

\begin{proof}[Proof of \cref{thm:main-theorem-list}]
We will show that the bounds given by \cref{lemma:dc0,lemma:dc-positive} both hold when adapted to the list coloring case. In particular, the proof of the bound analogous to \cref{lemma:dc0} is exactly the same as before. 

For the bound given in \cref{lemma:dc-positive} corresponding to the case when $d_c = 1$ (which is \cref{lemma:dc1} of \cref{prop:dc-positive-subcases}), we have two additional cases to consider. Recall,  there are 4 clusters to consider when $d_c=1$, namely: in chain $X_t$ we have $S_{X_t}(v^{\ast}, c)$ and $S_{X_t}(w,Y_t(v^{\ast}))$, whereas in chain $Y_t$ we have $S_{Y_t}(v^{\ast},c)$ and $S_{Y_t}(w,X_t(v^{\ast}))$. 

Suppose that $c\not\in L(v^{\ast})$.  Hence, $S_{X_t}(v^{\ast}, c)$ and $S_{Y_t}(v^{\ast}, c)$ are not flippable, but $v^{\ast}$ has an additional available color, i.e., an additional color $c'$ where $d_{c'}=0$ (this is because $X_t(w)=Y_t(w)=c$ and $c\not\in L(v^{\ast})$). This case is equivalent to the $d_c = 2$ case, where the other neighbor $w'$ has clusters of size $0$ (or equivalently size $\geq 7$ since the chain is $6$-local), so that the probability of flipping clusters involving $w'$ is $0.$

Suppose that $c\in L(v^{\ast})$ but for some $z\in S_{X_t}(w,Y_t(v^{\ast}))$ we have $\{c,Y_t(v^{\ast})\}\not\subset L(z)$ (the case where for some $z'\in S_{Y_t}(w,X_t(v^{\ast}))$ we have $\{c,X_t(v^{\ast})\}\not\subset L(z')$ is analogous).  Hence, the clusters $S_{X_t}(w,Y_t(v^{\ast}))$ and $S_{Y_t}(v^{\ast},c)$ are not flippable, and hence it corresponds to setting $a=0$ (or equivalently to setting $a-1\geq 7$ since the chain is $6$-local), and then the remaining analysis holds as before. 

Otherwise all the components are flippable, in which case we can directly apply the regular vertex coloring analysis.

For the bounds in \cref{lemma:weight2,lemma:weight1} of \cref{prop:dc-positive-subcases}, corresponding to the cases when $d_c~\geq~2$, note that not flipping the component of size $A$ or $B$ only decreases $mk\EE[\alpha_c].$ The other change that may happen is that some of the clusters $S_{X_t}(w_i, Y_t(v^{\ast}))$ or $S_{Y_t}(w_i, X_t(v^{\ast}))$ corresponding to the neighbors $w_i$ are not flippable. Since our flip dynamics chain is $6$-local, this is equivalent to setting the corresponding $a_i$ or $b_i$ to $0$ (or equivalently setting the corresponding $a_i$ or $b_i$ to $\geq 7$), and then the analysis is identical to the proofs of \cref{lemma:weight2,lemma:weight1}.
\end{proof}

\bibliographystyle{alpha}
\bibliography{biblio}

\end{document}